\documentclass[12pt,a4paper]{article}
\usepackage[T1]{fontenc}
\usepackage[utf8]{inputenc}
\usepackage[english]{babel}
\usepackage{lmodern,microtype}
\usepackage[a4paper,margin=1in]{geometry}
\usepackage{setspace}
\usepackage{amsmath,amssymb,amsfonts,amsthm,bm}
\usepackage{graphicx,booktabs,tabularx,threeparttable,caption,float}
\usepackage[round,authoryear]{natbib}
\usepackage[hidelinks]{hyperref}
\usepackage{url}
\usepackage{enumitem}
\usepackage{titlesec}
\titleformat{\section}{\large\bfseries}{\thesection.}{0.6em}{}
\titleformat{\subsection}{\normalsize\bfseries}{\thesubsection.}{0.6em}{}
\newcolumntype{Y}{>{\raggedright\arraybackslash}X}
\newcommand{\Normal}{\mathcal{N}}
\newcommand{\invlogit}{\operatorname{logit}^{-1}}
\newcommand{\For}{\mathrm{for}}
\newcommand{\Against}{\mathrm{against}}
\newcommand{\Nonvote}{\mathrm{nonvote}}
\newtheorem{proposition}{Proposition}

\title{\textbf{Public Signals, Concealed Choices:}\\
\large Dynamic Measurement without Behavioral Identification}
\author{Rok Spruk\thanks{Corresponding author: \href{mailto:rok.spruk@ef.uni-lj.si}{rok.spruk@ef.uni-lj.si}}\\University of Ljubljana, School of Economics and Business}
\date{August 2026}

\begin{document}
\maketitle

\begin{abstract}
Members of collective institutions may leave public traces while their individual choices remain concealed. This paper separates a corpus-conditional public position from the behavioral rule linking that position to participation and secret choice. I measure the first with a dynamic ordinal state-space model and establish a likelihood-exclusion result for the second. If behavioral-link parameters enter only the distribution of an entirely unobserved outcome and are a priori independent of measurement parameters, public signals leave their posterior equal to their prior. Declared monotone mappings are therefore reported as sensitivity envelopes, not estimates or identified bounds. Cabinet-sized simulations compare dynamic measurement with a static ordinal model and transparent summaries, exposing gains from temporal pooling and failures under weak measurement or omitted dimensionality. The application reconstructs 41 pre-decision official-source records preceding a concealed Slovenian cabinet decision. Only six of 21 ministers have attributable signals. The model locates a small visible group but leaves the remainder prior-dominated. Event-dependence adjustments, coding perturbations, rolling prediction, and the subsequently disclosed record support the same conclusion. Uncertainty should track the public information environment rather than be converted into unsupported behavioral precision.
\end{abstract}

\textbf{Keywords:} latent public positions, executive secrecy, state-space model, ordinal measurement, nonidentification, sensitivity analysis.

\section{Introduction}
\label{sec:introduction}

Most models of political positions begin with observed choices. Roll-call models use votes, judicial models use published decisions, and dynamic ideal-point models use repeated behavior to recover movement over time \citep{poole1985spatial,clinton2004statistical,martin2002dynamic}. Text and behavior can also be placed in a common model when observed votes anchor their shared latent structure \citep{kim2018spatial}. Yet many consequential institutions present the reverse information structure. Cabinet ministers, central bankers, regulators, international officials, and corporate directors speak publicly while their individual participation or choices remain confidential. The analyst observes selected public traces, not the behavioral outcome that normally gives an ideal-point model its interpretation.

That reversal creates three distinct inferential problems. First, public communication is not the same construct as private preference or proposal-specific choice. A statement supporting international law may reveal an issue orientation without implying support for one procedural intervention. Second, the public record is selected. Portfolio holders speak more often, parties differ in communications infrastructure, and strategic silence can be informative. Third, even a precisely measured public position does not identify the rule mapping that position into attendance, abstention, or a secret vote. A flexible model cannot solve the third problem by fitting the first more elaborately.

This paper develops a workflow for such settings. Its measurement object is deliberately narrow: the latent directional tendency of attributable communications recoverable under a declared source and coding protocol. I call this a \emph{corpus-conditional public-position state}. It is not a private preference, a complete information set, or an estimate of what an actor would say in an unobserved venue. The distinction treats construct validity and source selection as part of the estimand rather than as preprocessing details \citep{adcock2001measurement,grimmer2013text}. It also prevents no recovered signal from being read as neutrality, private silence, or lack of a position.

The statistical component combines a Gaussian state process with an ordered-logistic observation equation. Dynamic models for coded political text and latent positions are established tools \citep{elff2013dynamic,martin2002dynamic,west1997bayesian}, temporal smoothing can also conceal abrupt change or create misleading precision when the data are thin \citep{reuning2019dynamics}. Accordingly, the contribution is not a new state-space kernel. It is an inferential architecture that fixes the measurement calibration rather than overfitting a cabinet-sized corpus, makes observability limits explicit, discounts repeated communications from the same actor-event, compares the model with simple summaries, and refuses a behavioral translation that the likelihood does not identify.

The formal result supplies the boundary. Let the observed corpus inform latent public positions and measurement parameters, while behavioral-link parameters enter only an entirely unobserved institutional outcome. Under an explicit prior-independence condition, the marginal likelihood is constant in the behavioral parameters, so their posterior equals their prior. This is stronger than saying the vote is difficult to predict. No amount of additional public-signal precision can generate likelihood information about a parameter excluded from the observed-data law. If a correlated prior induces posterior movement, that movement comes from the prior coupling, not from an observed choice. The result applies the general Bayesian logic of nonidentified models to concealed collective choice \citep{poirier1998beliefs}. The paper therefore reports behavioral \emph{sensitivity envelopes}. A declared class of monotone participation-and-choice mappings transforms posterior public-position draws into coherent three-category probability vectors. The resulting ranges show how conclusions vary across assumptions that are external to the signal likelihood. They are not estimated probabilities, confidence intervals, sharp bounds, or a posterior over the calibration class.

Finite-sample experiments evaluate the measurement claim on its own terms. With 21 actors and 16 periods, the dynamic estimator is compared with a static ordinal model, actor means, latest signals, and recency-weighted scores. The dynamic model lowers root mean squared error and produces substantially better interval coverage than the static model in the baseline design, but it does not dominate every simple ranking rule under severe sparsity. Weak discrimination and omitted dimensionality reduce both recovery and coverage. A separate maintained-engine study samples the declared non-centered posterior with four-chain NUTS for 80 simulated datasets and the empirical specifications, accompanied by numerical target-equivalence and cross-engine checks.

The application concerns the Slovenian cabinet's 12 March 2026 decision on intervention in \emph{South Africa v. Israel} before the International Court of Justice. The reconstructed information set contains only material publicly available by 23:59 Central European Time on 11 March 2026. It is retrospective: source discovery and coding occurred after the individual record had been disclosed. Temporal freezing excludes post-decision communications, but it does not turn the exercise into a prospective design. The 41-row restricted official-source corpus contains 36 attributable ministerial signals and five organizational signals. Only six of 21 decision-day ministers have any attributable individual signal, two actors account for 29 of the 36.

The empirical posterior locates a small, highly visible group with a positive public orientation toward external legal and diplomatic pressure. It does not locate the cabinet as a whole. Fifteen ministers remain close to broad prior distributions, and event-level adjustments reduce precision for the most prolific communicators. The subsequently disclosed record of five formal votes for, seven against, and nine members not listed in either voting category, is not used for estimation. Its mismatch with the one-sided public corpus illustrates why issue orientation, formal participation, and proposal-specific choice must remain separate.

The paper makes three contributions. First, it defines a defensible measurement object for selected public traces and embeds that object in a dynamic ordinal model with transparent calibration and dependence diagnostics. Second, it states and proves the likelihood boundary separating public-position measurement from concealed behavior, including the role of prior dependence. Third, it provides an executable evaluation and replication workflow in which uncertainty, negative results, simple benchmarks, coding fragility, and computational diagnostics are part of the evidence. The aim is not to make secrecy disappear. It is to make clear what the public record can support and where inference must stop.

Section~\ref{sec:model} defines the estimand, measurement model, nonidentification result, and calibration envelopes. Section~\ref{sec:simulation} evaluates finite-sample performance. Section~\ref{sec:data} describes the corpus and coding design. Section~\ref{sec:results} reports the empirical measurements and sensitivity analyses. Section~\ref{sec:disclosed} uses the disclosed record as a diagnostic contrast. Section~\ref{sec:conclusion} concludes.

\section{Measurement and Behavioral Ambiguity}
\label{sec:model}

\subsection{The corpus-conditional estimand}

Let $i=1,\ldots,N$ index actors, $t=1,\ldots,T$ calendar periods, and $r=1,\ldots,R$ recovered communications. The source protocol determines which public venues are searched, the temporal cutoff, the attribution rule, and the inclusion criterion. Conditional on that protocol, $\theta_{it}$ denotes the directional tendency of actor $i$'s recoverable public communications at time $t$. Larger values indicate greater publicly expressed support for the defined family of external legal and diplomatic actions. This estimand is conditional in two senses. It conditions on the declared source universe and on a one-dimensional coding calibration. It does not claim that the observation process is ignorable. If the probability that a communication becomes public depends on its content or the actor's latent position, a structural selection model would require additional assumptions or data \citep{rubin1976inference,carrubba2006off,hug2010selection}. The present corpus cannot identify such a model. Instead, the risk set and observed counts are reported, extremity-dependent visibility is examined in simulation, and the substantive interpretation remains limited to the retrieved public record.

\subsection{Dynamic ordinal measurement}

The latent state follows a random walk,
\begin{align}
\theta_{i1} &\sim \Normal(0,\sigma_0^2), \label{eq:initial}\\
\theta_{it} &= \theta_{i,t-1}+u_{it}, \qquad
u_{it}\sim\Normal(0,\tau^2), \quad t=2,\ldots,T. \label{eq:state}
\end{align}
The preferred calibration fixes $\sigma_0=1$ and $\tau=0.30$. Independent actor processes deliberately avoid borrowing a party or portfolio mean for actors without attributable evidence. The choice makes the absence of information visible rather than filling it through group membership. Each recovered individual communication is coded $y_r\in\{-2,-1,0,1,2\}$. With $\widetilde y_r=y_r+3$, the observation equation is
\begin{equation}
\widetilde y_r\mid\theta_{i(r),t(r)}
\sim \operatorname{OrderedLogit}\!\left(
\beta_{m(r)}\theta_{i(r),t(r)},\bm\kappa\right),
\label{eq:ordered}
\end{equation}
where $m(r)$ distinguishes direct from attributed evidence. The cut-points are fixed at $\bm\kappa=(-1.5,-0.5,0.5,1.5)$, with $\beta_{\mathrm{direct}}=1$ and $\beta_{\mathrm{attributed}}=0.75$. These constants orient and scale the posterior and encode the prespecified judgment that attributed evidence is less discriminating. They are calibrations, not corpus-estimated parameters. With only 36 individual records concentrated among six actors and no recovered negative individual category, jointly estimating actor paths, cut-points, source effects, and heterogeneous discrimination would replace transparent assumptions with weakly identified flexibility. The supplement varies state volatility, source inclusion, event dependence, and coding. Organizational statements remain in the published corpus but are excluded from the preferred actor-level likelihood because allocating a government or party statement to particular members would require an additional attribution model. Repeated items within a communications episode are conditionally independent in the item-level fit. Two prespecified sensitivities relax their leverage: event tempering uses a fractional power likelihood giving each actor-event total log-score weight one, and event deduplication replaces each actor-event with a single median-coded observation. Silence is not an ordinal category. For an actor with no recovered item, a posterior centered near zero reflects the zero-centered prior, not evidence of moderation. All position summaries therefore report posterior uncertainty together with actor-level signal counts.

\subsection{A likelihood-exclusion result}

Let $Y$ denote the observed corpus, $\Theta$ the latent public-position trajectories, $\psi$ the measurement parameters and fixed calibration, $B$ the unobserved institutional behavior, and $\lambda$ the parameters mapping a decision-time state into participation and choice. The following proposition states the conditions under which the public corpus cannot update $\lambda$.

\begin{proposition}[Behavioral non-identification]
\label{prop:nonid}
Suppose (i) the observed-data law is $p(Y\mid\Theta,\psi)$ and contains no $\lambda$, (ii) $B$ is entirely unobserved and $\sum_b p(B=b\mid\Theta,\lambda)=1$, and (iii) the prior factorizes as $p(\lambda,\Theta,\psi)=p(\lambda)p(\Theta,\psi)$. Then
\begin{equation}
p(\lambda\mid Y)=p(\lambda).
\label{eq:posterior_prior}
\end{equation}
Thus the public-signal corpus supplies no likelihood information about participation or choice-link parameters, even if it measures $\Theta$ arbitrarily precisely.
\end{proposition}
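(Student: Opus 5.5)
The plan is to write the full joint model and integrate out everything except $\lambda$. Joint:
\begin{equation*}
p(Y,B,\Theta,\psi,\lambda)=p(Y\mid\Theta,\psi)\,p(B\mid\Theta,\lambda)\,p(\lambda)\,p(\Theta,\psi).
\end{equation*}
Here (i) gives the observation factor and (iii) gives the prior factorization. The decomposition also encodes the modeling convention that $Y$ and $B$ are conditionally independent given $(\Theta,\psi,\lambda)$. Likewise $B$ depends on the parameters only through $(\Theta,\lambda)$. I will state this explicitly as part of what (i) means, namely that the observed-data law does not depend on $B$ or $\lambda$. This convention is the step I expect to need the most care, since the proposition's wording leaves it implicit.

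Because $B$ is entirely unobserved (ii), the observed-data likelihood is obtained by summing over $b$:
\begin{equation*}
\sum_b p(Y,B=b,\Theta,\psi,\lambda)=p(Y\mid\Theta,\psi)\,p(\lambda)\,p(\Theta,\psi)\sum_b p(B=b\mid\Theta,\lambda).
\end{equation*}
By (ii) the final sum equals one. Integrating over $(\Theta,\psi)$ with Fubini/Tonelli (all integrands are nonnegative) then gives
\begin{equation*}
p(Y,\lambda)=p(\lambda)\int p(Y\mid\Theta,\psi)\,p(\Theta,\psi)\,d\Theta\,d\psi=p(\lambda)\,p(Y).
\end{equation*}
The factor $p(Y)$ does not involve $\lambda$, so the marginal likelihood $p(Y\mid\lambda)$ is constant in $\lambda$.

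Bayes' rule gives $p(\lambda\mid Y)=p(Y,\lambda)/p(Y)=p(\lambda)$, provided $0<p(Y)<\infty$. Positivity holds because the ordered-logit likelihood is strictly positive and the prior is proper. For the closing remark, note that this argument never used the precision of the $\Theta$-posterior: $p(\Theta\mid Y)$ may concentrate arbitrarily and the conclusion is unchanged. As a contrast, I will point out where the argument fails without (iii). With a coupled prior, $p(\lambda\mid Y)=\int p(\lambda\mid\Theta,\psi)\,p(\Theta,\psi\mid Y)$ can move, but only through the prior coupling, which matches the paper's interpretive claim.
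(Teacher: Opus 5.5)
Your proof is correct and is essentially the paper's argument: sum out the unobserved $B$ using (ii), factor out $p(\lambda)$ by (iii), and note that the remaining integral $\int p(Y\mid\Theta,\psi)\,p(\Theta,\psi)\,d\Theta\,d\psi$ is constant in $\lambda$, so normalization returns the prior. You also spell out two things the paper's one-line display uses without stating them: the factorization $p(Y,B\mid\Theta,\psi,\lambda)=p(Y\mid\Theta,\psi)\,p(B\mid\Theta,\lambda)$, and the check that $0<p(Y)<\infty$.
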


\begin{proof}
After integrating the unobserved behavior,
\begin{align*}
p(\lambda\mid Y)
&\propto p(\lambda)
\int\!\!\int p(Y\mid\Theta,\psi)p(\Theta,\psi)
\underbrace{\sum_b p(B=b\mid\Theta,\lambda)}_{=1}
\,d\Theta\,d\psi \\
&=p(\lambda)m(Y),
\end{align*}
where $m(Y)$ is constant in $\lambda$. Normalization gives Equation~\eqref{eq:posterior_prior}.
\end{proof}

The prior-independence condition matters. If $p(\lambda,\Theta,\psi)$ is correlated, observing $Y$ can change the marginal distribution of $\lambda$ by updating $\Theta$ or $\psi$. That change is induced by the stipulated prior dependence, it is not evidence from an observed participation or choice. Likewise, appending one disclosed decision after model development does not create a transportable behavioral model. Repeated development decisions with observed individual outcomes, externally validated behavioral information, or defensible exclusion restrictions would be needed to estimate such a mapping.

\subsection{Coherent sensitivity envelopes}

To display behavioral implications without relabeling assumptions as estimates, let $q_i(a,k)=\invlogit(a+k\theta_i^\star)$ be support conditional on formal participation, where $\theta_i^\star=\theta_{iT}$. Let $\pi$ be a common participation probability and declare
\begin{equation}
\mathcal C=\{(a,k,\pi):a\in[-0.5,0.5],\ k\in[0.5,2],\ \pi\in[0.5,0.9]\}.
\label{eq:class}
\end{equation}
Every $c=(a,k,\pi)\in\mathcal C$ implies the coherent vector
\begin{align}
P_i(\For\mid c)&=\pi q_i(a,k), \nonumber\\
P_i(\Against\mid c)&=\pi\{1-q_i(a,k)\}, \label{eq:coherent_vector}\\
P_i(\Nonvote\mid c)&=1-\pi. \nonumber
\end{align}
The reported 95\% posterior sensitivity envelope is the union of equal-tailed 95\% posterior intervals across $c\in\mathcal C$. It propagates measurement uncertainty and shows the consequences of a declared class, but it places no probability distribution over that class. It is therefore neither a Bayesian posterior for behavior nor an identified set. Changing $\mathcal C$ changes the envelope by construction.

\section{Finite-Sample Evaluation}
\label{sec:simulation}

\subsection{Design and comparison estimators}

Each simulation contains $N=21$ actors and $T=16$ periods. States are generated from Equations~\eqref{eq:initial} and \eqref{eq:state}, in the baseline, each actor-period produces one ordinal signal with probability 0.20 and discrimination one. The six stress conditions impose severe sparsity, denser information, extremity-dependent visibility, weaker discrimination, staggered entry, or an omitted second latent dimension. Each condition contains 100 independent replications with recorded seeds.

The target is the decision-time state $\theta_{iT}$. The dynamic ordinal estimator is compared with a static ordinal model, the actor mean, the latest signal, and an exponentially recency-weighted score. Performance is evaluated by root mean squared error, Spearman rank correlation, and, for the two probabilistic models, coverage and width of nominal 95\% intervals, every displayed simulation mean is accompanied by its Monte Carlo standard error. The broad experiment uses a Laplace approximation to the non-centered posterior, checked against elliptical-slice sampling. A separate NUTS experiment fits the declared posterior to 20 datasets in each of four focal conditions.

\begin{table}[!htbp]
\centering
\caption{Finite-Sample Performance of the Dynamic Ordinal Estimator}
\label{tab:simulation}
\scriptsize
\begin{threeparttable}
\setlength{\tabcolsep}{3pt}
\begin{tabular}{lrrrrrr}
\toprule
Condition & Reps. & Signals & RMSE & Rank & Coverage & Width \\
\midrule
Baseline & 100 & 67.1 (0.7) & 1.141 (0.021) & 0.619 (0.017) & 0.945 (0.005) & 4.331 (0.014) \tabularnewline
Severe Sparsity & 100 & 20.4 (0.4) & 1.354 (0.024) & 0.402 (0.021) & 0.944 (0.006) & 5.314 (0.014) \tabularnewline
Dense & 100 & 135.2 (0.9) & 0.928 (0.016) & 0.755 (0.009) & 0.950 (0.005) & 3.634 (0.010) \tabularnewline
Extremity-Dependent Visibility & 100 & 119.8 (1.2) & 0.918 (0.016) & 0.763 (0.011) & 0.951 (0.005) & 3.790 (0.012) \tabularnewline
Weak Discrimination & 100 & 67.2 (0.7) & 1.308 (0.020) & 0.431 (0.020) & 0.891 (0.007) & 4.307 (0.014) \tabularnewline
Turnover & 100 & 56.7 (0.7) & 1.174 (0.022) & 0.618 (0.014) & 0.945 (0.005) & 4.438 (0.014) \tabularnewline
Multidimensional & 100 & 68.3 (0.7) & 1.312 (0.022) & 0.523 (0.017) & 0.893 (0.007) & 4.352 (0.014) \tabularnewline
\bottomrule
\end{tabular}
\begin{tablenotes}[flushleft]\footnotesize
\item \textit{Note:} Except for the replication count, entries are means with Monte Carlo standard errors in parentheses. RMSE and width are in latent-state units, rank is decision-time Spearman correlation, coverage refers to equal-tailed 95\% posterior intervals. Extremity-dependent visibility increases observation probabilities for actors far from the center, it does not estimate or correct a selection mechanism.
\end{tablenotes}
\end{threeparttable}
\end{table}

\begin{figure}[!htbp]
\centering
\includegraphics[width=\textwidth]{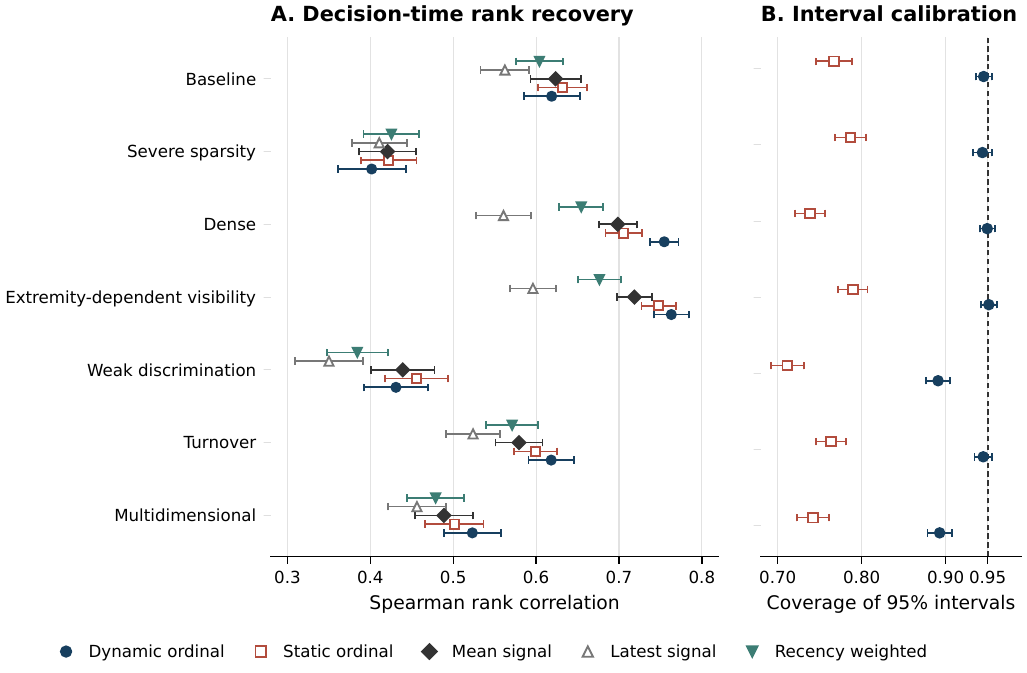}
\caption{Simulation performance across information conditions. Panel A reports mean decision-time Spearman rank correlations for the dynamic ordinal estimator, the static ordinal model, and three transparent signal summaries. Panel B reports equal-tailed 95\% interval coverage for the two probabilistic models, the dashed line marks nominal coverage. Points are condition means and horizontal whiskers extend plus or minus 1.96 Monte Carlo standard errors across 100 independently seeded replications. Each replication contains 21 actors and 16 periods.}
\label{fig:simulation}
\end{figure}

Figure~\ref{fig:simulation} makes the performance tradeoff visible. The baseline rank correlation of the dynamic model is 0.619 (MCSE 0.017), compared with 0.632 (MCSE 0.015) for the static ordinal model and 0.623 (MCSE 0.015) for the actor mean. The dynamic advantage lies instead in lower RMSE, 1.141 (MCSE 0.021) versus 1.212 (MCSE 0.023), and better interval coverage, 0.945 (MCSE 0.005) versus 0.767 (MCSE 0.011). Under dense information, the dynamic model also has the strongest mean ranking correlation, 0.755 (MCSE 0.009), compared with 0.706 (MCSE 0.011) for the static model. Under severe sparsity, several simple methods slightly exceed its rank recovery. Temporal pooling is not a substitute for information.

The visibility condition performs well because its design exposes more extreme states. It therefore shows that the consequences of selective observability depend on the selection rule, it does not show that selection is ignorable. Under weak discrimination and omitted dimensionality, coverage falls to 0.891 (MCSE 0.007) and 0.893 (MCSE 0.007). The estimator should not be described as calibrated when its measurement model is materially wrong.

\subsection{Posterior computation}

The maintained-engine study samples the declared non-centered posterior in PyMC for 20 datasets under each of the baseline, severe-sparsity, weak-discrimination, and multidimensional conditions. Every fit uses four chains, 1,000 warm-up iterations, and 1,000 retained draws. Across 320 chains there are no divergences or maximum-tree-depth hits, the worst rank-normalized $\widehat R$ is 1.012, the minimum bulk and tail effective sample sizes are 3,964 and 1,730, and the minimum E-BFMI is 0.879 \citep{abrilpla2023pymc,hoffman2014nuts,vehtari2021rank}. A 100-state test matches an independent log-density implementation to a maximum absolute difference of $5.7\times10^{-13}$. A mathematically equivalent Stan program and independent JAX NUTS and elliptical-slice implementations provide auditable cross-engine references. These checks establish computational fidelity to the declared posterior, not substantive validity of its assumptions.

\section{Restricted Official-Source Corpus}
\label{sec:data}

\subsection{Temporal and source protocol}

The case concerns the Slovenian cabinet meeting of 12 March 2026. The information cutoff is 23:59 Central European Time on 11 March 2026. Version 1.0 searches publicly accessible GOV.SI and Levica pages using minister names and issue terms covering Israel, Gaza, Palestine, the ICJ, international law, recognition, sanctions, arms restrictions, ceasefire, and humanitarian law. Each retained row records publication date, title, URL, source, mode, actor-event identifier, ordinal code, summary, and coding rationale.

The source universe is intentionally restricted and reproducible, not exhaustive. It omits newspaper archives, television appearances, parliamentary exchanges, deleted social-media posts, and all private communication. It also gives actors with portfolio or party communications responsibilities more opportunities to appear. The estimand and conclusions are conditional on those boundaries. A broader source universe could alter both the recovered set and the actor ordering.

The reconstruction was conducted after disclosure of the individual record. The pre-decision cutoff prevents direct temporal leakage, but source discovery could still reflect hindsight. For that reason, the application is illustrative rather than a clean prospective validation exercise. The disclosed outcomes are stored separately and never enter the measurement likelihood or the behavioral calibration class.

\subsection{Coding and dependence}

The five categories measure a broad \emph{public intervention orientation}: support for external legal, diplomatic, or coercive pressure concerning Israel and Palestine. A code of 2 requires explicit endorsement or implementation of ICJ authority, sanctions, embargoes, diplomatic isolation, or comparably forceful action, 1 denotes qualified support for recognition, international-law enforcement, coordinated pressure, ceasefire, or humanitarian restraint, 0 is relevant but directionally indeterminate, negative values are reserved for qualified or explicit opposition. The scale does not claim that recognition, a ceasefire, sanctions, and intervention in one legal proceeding are behaviorally equivalent. Their compression into one ordinal dimension is a declared measurement choice, tested through a multidimensional simulation stress condition and interpreted accordingly.

The 41-row corpus contains 32 direct individual items, four attributed individual items, and five organizational items. The preferred fit uses the 36 individual items. Tanja Fajon contributes 19, Robert Golob 10, Asta Vrecko four, and Luka Mesec, Simon Maljevac, and Matjaz Han one each. Fifteen ministers have no recovered attributable item. The individual codes comprise one zero, 23 ones, 12 twos, and no negative observation. The data therefore measure degrees within a one-sided retrieved record, they do not empirically span cabinet support and opposition. 

A deterministic, outcome-blind rule applies a prespecified lexical rubric to randomized titles and summaries. It agrees exactly on 35 of 36 individual records, remains within one category on all 36, and has quadratic weighted $\kappa=0.947$. This is a mechanical consistency check, not independent intercoder reliability, because it operates on author-produced fields rather than an independent reading of the original pages. The archive separately provides a randomized, source-linked packet and codebook for a qualified non-author coder, following the distinction between transparent coding procedures and genuine independent reliability evidence \citep{mikhaylov2012coder}. Until that return exists, the paper makes no human intercoder-reliability claim. A one-category perturbation analysis asks whether local miscoding changes the actor ordering without pretending to estimate coder error. The 36 items form 18 actor-events. In the event-tempered specification, each source item receives weight equal to the inverse of its event's item count. In the event-deduplicated specification, each actor-event contributes one median-coded record. These two strategies address a central threat in the corpus, namely, repeated press releases should not mechanically turn communications activity into independent measurement precision.

\section{Empirical Measurement and Sensitivity}
\label{sec:results}

\subsection{Decision-time public positions}

The preferred empirical posterior is fitted with four NUTS chains, each using 1,000 warm-up and 1,000 retained draws. The maximum rank-normalized $\widehat R$ is 1.006, the minimum bulk and tail effective sample sizes are 4,206 and 2,147, there are no divergences or maximum-tree-depth hits, and the minimum E-BFMI is 0.950.

\begin{figure}[!htbp]
\centering
\includegraphics[width=\textwidth]{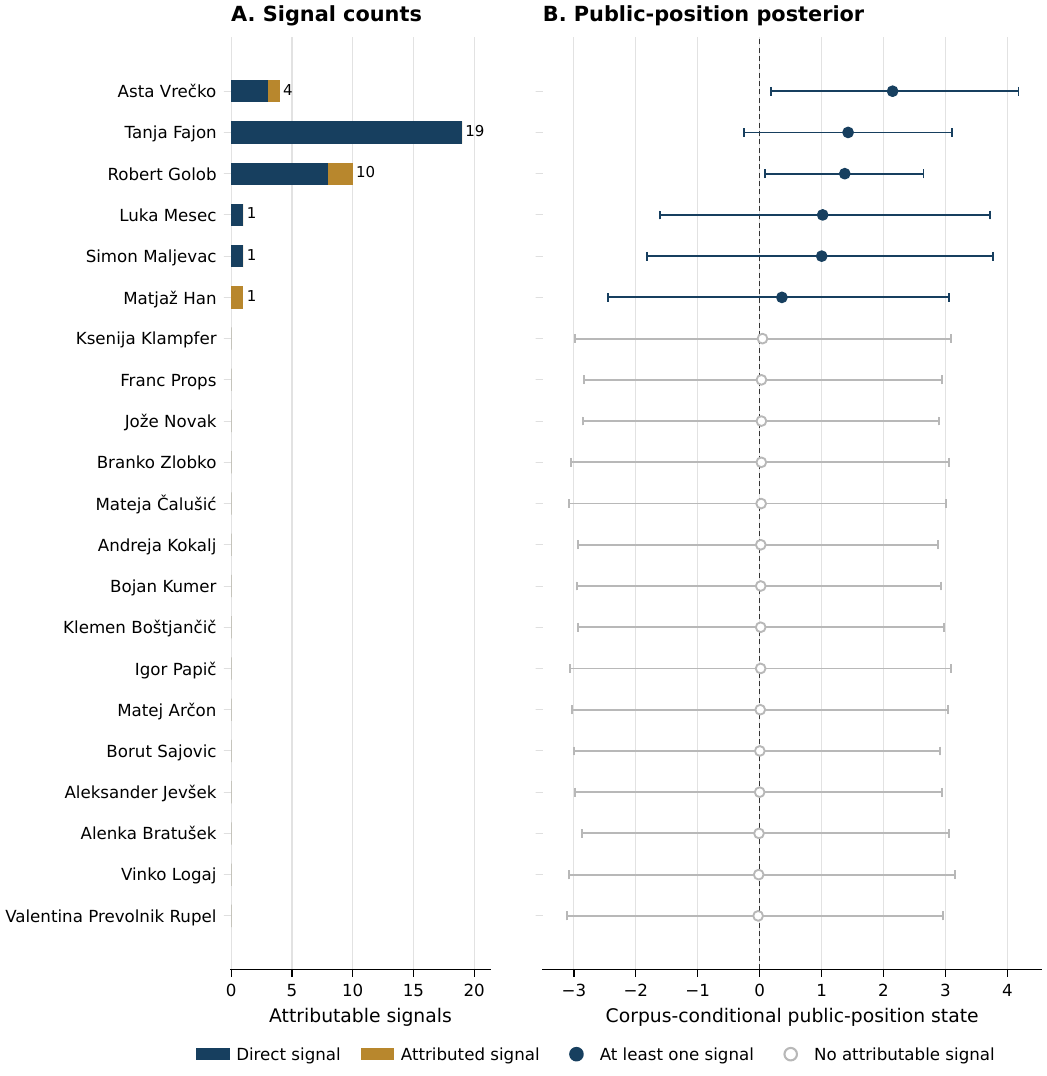}
\caption{Evidence concentration and decision-time public-position states. Panel A reports exact counts of direct and attributed signals in the 36-record actor-level corpus. Panel B reports posterior means and equal-tailed 95\% credible intervals from the preferred item-level NUTS fit. Filled markers denote ministers with at least one attributable signal, hollow gray markers denote ministers whose decision-time posterior is the state-process prior. The retrieval window ends on 11 March 2026, organizational records are excluded.}
\label{fig:positions}
\end{figure}

Figure~\ref{fig:positions} pairs the highly concentrated observation process with the resulting posterior uncertainty. Asta Vrecko has the largest posterior mean, +2.15, with a 95\% credible interval of $[+0.19,+4.18]$. Tanja Fajon follows at +1.43 $[-0.26,+3.10]$ and Robert Golob at 1.37 $[+0.08,+2.65]$. Luka Mesec and Simon Maljevac have means near one but intervals spanning both sides of zero because each contributes one item. Matjaz Han's attributed item yields a mean of +0.36 with an interval of $[-2.45,+3.05]$. The remaining fifteen ministers have means close to zero and intervals of roughly six latent units. Those intervals are the empirical result for unobserved actors and the restricted corpus does not locate them. Absolute latent values are conditional on the fixed scale. The robust statement is narrower. Among recovered communications, Vrecko, Fajon, and Golob form the most clearly positive and best-informed group, while the corpus contains too little actor-specific information to rank most of the cabinet.

\subsection{Dependence, coding, and predictive checks}

Event adjustment preserves the visible ordering but weakens precision. Golob's posterior mean changes from +1.37 in the item-level fit to+ 1.16 under event tempering and +1.25 after deduplication. Fajon moves from +1.43 to +1.30 and +1.40 while Vrecko moves from +2.15 to +2.06 and +2.25. Under event tempering, even Vrecko's 95\% interval slightly crosses zero. This is the intended response to clustered evidence, not instability to be concealed.

The deterministic recoding and one-category perturbation exercise preserve the leading group. Direct-only estimation principally removes Han's weakly informed attributed signal. State volatility preserves the ordering but not the magnitude: raising $\tau$ from 0.15 to 0.50 moves Vrečko's filtered mean from +1.69 to +2.79 and the two single-item estimates from +0.77 to +1.43. This dependence on calibration is substantive and is reported rather than averaged away. A 12-target rolling-origin exercise predicts future actor-event signals using only earlier events. The dynamic model has a lower mean negative log score than the static ordinal model, +1.290 (across-target SE 0.110) versus +1.348 (SE 0.079), but the mean-signal and recency-weighted rules score +1.147 (SE 0.112) and +1.141 (SE 0.114). The targets are dependent within three actors, so the differences are descriptive and support no dominance claim.

\subsection{Behavioral sensitivity envelopes}

\begin{figure}[!htbp]
\centering
\includegraphics[width=0.96\textwidth]{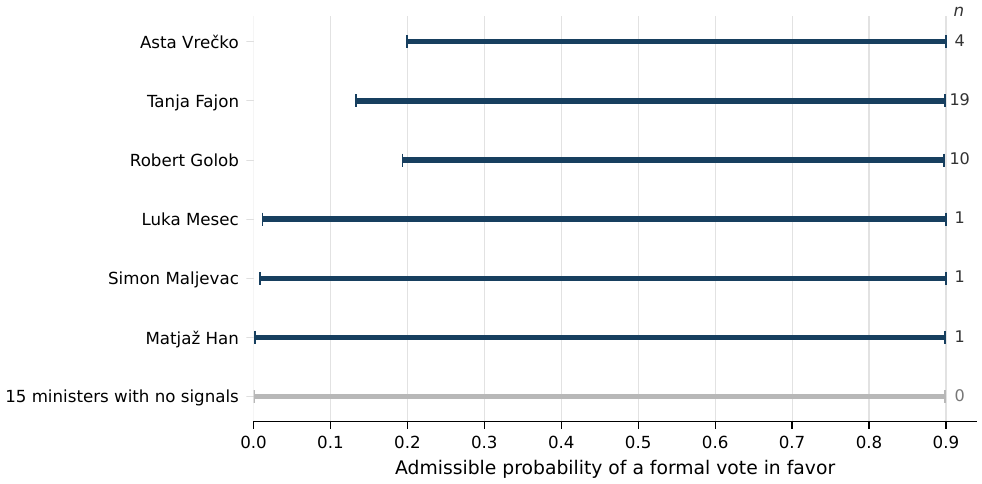}
\caption{Behavioral sensitivity envelopes for a formal vote in favor. For each minister with an attributable signal, the segment is the union of equal-tailed 95\% posterior intervals over the continuous mapping class $\mathcal C$ in Equation~\eqref{eq:class}. The final row reports the common envelope for the 15 ministers without attributable signals. No midpoint is plotted because no probability measure is assigned to $\mathcal C$. These are sensitivity summaries, not estimated choice probabilities or sharp identified bounds. The right column reports attributable-signal counts.}
\label{fig:envelopes}
\end{figure}

Figure~\ref{fig:envelopes} shows that the sensitivity envelopes remain wide even for the visible ministers and nearly cover the admissible probability range for actors without signals. Non-voting probability ranges from 0.10 to 0.50 for every actor because participation is calibrated rather than estimated. The actor-level interval endpoints are reported in the Online Supplement and preserved in the machine-readable replication output. Their construction makes the source of any behavioral precision explicit: relative movement reflects the measured public-position posterior, whereas absolute levels and slopes come from $\mathcal C$. Proposition~\ref{prop:nonid} rules out describing the latter as learned from the corpus.

\section{The Disclosed Record as a Diagnostic Contrast}
\label{sec:disclosed}

The subsequently disclosed official record lists five formal votes in favor and seven against, nine eligible government members appear in neither formal category \citep{gsv2026vote}. The record does not distinguish abstention from nonattendance, so the third category is labeled non-voting. The Prime Minister is not listed as a formal voter but is recorded as having expressed support at the start of the discussion. These outcomes do not estimate the measurement model, select its calibration, or tune $\mathcal C$. The corpus and codebook were reconstructed after disclosure, so presenting the comparison as prospective validation would be misleading. Its role is diagnostic. A retrieved record containing no negative individual statement coexists with seven formal votes against. Several actors without attributable public signals voted against, while several actors with positive public-position estimates did not cast a recorded formal vote.

The divergence is compatible with multiple mechanisms: proposal-specific legal concerns, coalition strategy, attendance, abstention, portfolio responsibility, or a distinction between broad issue orientation and the precise procedural act. The public corpus cannot distinguish among them. That is exactly the point. The disclosed record supplies a sharp empirical warning against treating public rhetoric as a secretly observed vote, but one decision is not enough to estimate a general behavioral link or validate it out of sample.

\begin{table}[!htbp]
\centering
\caption{Public-Signal Availability and Subsequently Disclosed Formal Outcomes}
\label{tab:diagnostic}
\begin{threeparttable}
\begin{tabular}{lrrrr}
\toprule
Pre-decision attributable signals & For & Against & Non-voting & Total \\
\midrule
At least one signal & 3 & 0 & 3 & 6 \\
No attributable signal & 2 & 7 & 6 & 15 \\
\midrule
Total & 5 & 7 & 9 & 21 \\
\bottomrule
\end{tabular}
\begin{tablenotes}[flushleft]\footnotesize
\item \textit{Note:} Entries are descriptive counts. Signal availability is defined using the restricted corpus's pre-decision publication cutoff, although source discovery and coding occurred after disclosure of the outcome. ``Non-voting'' means that the minister was listed in neither formal voting category and does not distinguish abstention from nonattendance. The disclosed outcomes were excluded from every measurement fit. This is a retrospective diagnostic cross-classification, not a forecast-validation test or an estimate of the behavioral link.
\end{tablenotes}
\end{threeparttable}
\end{table}

Table~\ref{tab:diagnostic} provides the relevant comparison without converting public-position estimates into vote predictions. All seven recorded opposing votes occurred among ministers for whom the restricted corpus recovered no attributable signal. Yet absence of a signal is not evidence of opposition. Two ministers in that group voted in favor and six were not listed in either formal voting category. Conversely, three of the six ministers with attributable signals did not cast a recorded formal vote. The contrast therefore reinforces the separation among public observability, issue positioning, formal participation, and proposal-specific choice. These outcomes do not estimate the measurement model, select its calibration, or tune $\mathcal C$. The corpus and codebook were reconstructed after disclosure, so presenting the comparison as prospective validation would be misleading. The analysis accordingly reports no threshold, classification metric, or fitted behavioral equation against the realized outcome. The divergence is compatible with multiple mechanisms such as proposal-specific legal concerns, coalition strategy, attendance, abstention, portfolio responsibility, or a distinction between broad issue orientation and the precise procedural act. The public corpus cannot distinguish among them. That is exactly the point. The disclosed record supplies a sharp empirical warning against treating public rhetoric as a secretly observed vote, but one decision is not enough to estimate a general behavioral link or validate it out of sample.

\section{Conclusion}
\label{sec:conclusion}

Many consequential decisions are made by collective institutions that disclose an aggregate outcome while concealing the participation or choices of individual members. Before those decisions, however, members may leave public traces through speeches, interviews, official statements, press releases, or attributed remarks. This creates a deceptively data-rich inferential setting where public signals are observed, but the behavior of substantive interest is not. The signals are also selected and unevenly distributed. Some actors communicate repeatedly, others remain silent, and silence has no common behavioral meaning. Most importantly, a broad public position on an issue is not equivalent to participation in a particular decision or to a proposal-specific choice. The central problem is therefore not simply how to extract more information from public statements. It is how to determine exactly what those statements identify, and where inference must stop.

Existing methods provide powerful tools for recovering latent positions from observed indicators and for modeling choices when behavioral outcomes or identifying restrictions are available. The unresolved case lies between these two settings. Researchers observe selected indicators of public positioning but not the individual behavior to which they ultimately wish to relate those indicators. In this intermediate setting, precision about a latent public position can easily be relabeled as precision about an unobserved action. This paper fills that methodological gap by separating the measurement of public positions from the behavioral rule linking those positions to participation and concealed choice. It defines a corpus-conditional public-position estimand, measures that object with a dynamic ordinal state-space model, and establishes a likelihood-exclusion result for the behavioral mapping. When behavioral-link parameters do not enter the observed-data law and are a priori independent of the measurement parameters, the public-signal corpus leaves their posterior equal to their prior. Behavioral sensitivity envelopes can therefore show what follows from a declared mapping class, but they cannot transform that class into an empirically learned relationship.

The significance of this contribution lies not in claiming that a dynamic model is universally superior, but in providing an inferential architecture in which the estimand, estimator, diagnostics, and substantive claims remain aligned. The simulations show when temporal pooling is valuable and when it is not. In the baseline design, the dynamic ordinal model reduces latent-state error and produces substantially better interval coverage than a static model. Under dense information, it also delivers the strongest rank recovery. Under severe sparsity, however, transparent summaries remain competitive, while weak discrimination and omitted dimensionality erode calibration. These failures are not peripheral qualifications. They define the method's domain. Temporal structure can organize information that exists, but it cannot manufacture missing signals, eliminate selected observability, or repair a misspecified measurement space.

The Slovenian application demonstrates the empirical importance of these distinctions. Thirty-six attributable records locate only six of 21 ministers, and two ministers account for 29 of those records. Event-level adjustments appropriately weaken the precision generated by repeated communications, while the fifteen ministers without attributable signals retain broad, prior-dominated posteriors. The subsequently disclosed record provides a diagnostic contrast rather than a prospective validation test. All seven recorded opposing votes came from ministers for whom the restricted corpus recovered no attributable signal. Yet silence was not opposition: within that same group, two ministers voted in favor and six were not listed in either formal voting category. Conversely, three of the six ministers with attributable signals did not cast a recorded formal vote. The comparison therefore does not validate a vote-forecasting rule. It demonstrates why public observability, broad issue positioning, formal participation, and proposal-specific choice must be treated as distinct empirical objects.

The external validity of the paper consequently lies in its inferential logic, not in transporting Slovenian actor rankings or numerical estimates to other institutions. Proposition~\ref{prop:nonid} applies wherever the observed likelihood contains public signals but excludes the concealed participation or choice that researchers ultimately wish to explain. That structure arises in central banks, regulatory commissions, judicial conferences, international organizations, coalition governments, corporate boards, and other institutions in which members communicate publicly but decide privately. The measurement model is most useful where signals can be assigned to actors, ordered over time, and coded on a defensible common construct. Its limitations are equally portable. When the source universe is selective, attribution is incomplete, signals are multidimensional, or behavior remains entirely unobserved, posterior precision about public positions must not be interpreted as behavioral identification.

The framework also clarifies what additional evidence would be required to move beyond sensitivity analysis. Prospectively frozen corpora would reduce hindsight in source discovery. Independent human coding would strengthen the measurement layer. Repeated institutional decisions with observed individual outcomes could support estimation of participation and choice mappings, while externally justified exclusion restrictions or validated behavioral information could provide additional leverage. Those extensions would not overturn the paper's identification result but would change the observed-data law so that behavioral parameters could, in principle, be learned.

The practical implication is a disciplined sequence of defining the public measurement object, freezing and document the information set, preserving uncertainty and actor-event dependence, benchmarking the model against transparent alternatives, distinguishing computational verification from substantive validation, stating the assumptions connecting position to behavior, and declining conclusions that the observed likelihood cannot support. A credible analysis should become less certain when public information thins, dependence increases, or the measurement structure weakens. That is not a failure of the model. It is the appropriate inferential response to the evidence. The purpose of a measurement model is not to make secrecy disappear. It is to recover the structure contained in the public record, expose the assumptions required to go further, and identify the point at which evidence gives way to conjecture. In settings structured by secrecy, knowing where to stop is not methodological modesty. It is identification.

\section*{Funding}

This work received no external funding.

\section*{Data Availability Statement}

The replication archive supplied with the manuscript contains the 41-row restricted official-source corpus, actor and actor-event files, source-linked coding materials, the disclosed outcomes in a separate file, all Python and Stan implementations, complete simulation and empirical outputs, machine-readable diagnostics, tables, and compilation instructions. Private communications are neither observed nor counted and are outside the estimand. Upon conditional acceptance, the verified archive will be deposited in the online repository and the permanent citation will be inserted here.

\section*{Competing Interests}

The author declares none.

\bibliographystyle{chicago}
\bibliography{references_new}

\begin{thebibliography}{}

\bibitem[\protect\citeauthoryear{Abril-Pla, Andreani, Carroll, Dong,
  Fonnesbeck, Kochurov, Kumar, Lao, Luhmann, Martin, Osthege, Vieira, Wiecki,
  and Zinkov}{Abril-Pla et~al.}{2023}]{abrilpla2023pymc}
Abril-Pla, O., V.~Andreani, C.~Carroll, L.~Dong, C.~J. Fonnesbeck, M.~Kochurov,
  R.~Kumar, J.~Lao, C.~C. Luhmann, O.~A. Martin, M.~Osthege, R.~Vieira,
  T.~Wiecki, and R.~Zinkov (2023).
\newblock {PyMC}: A modern and comprehensive probabilistic programming
  framework in {Python}.
\newblock {\em PeerJ Computer Science\/}~{\em 9}, e1516.

\bibitem[\protect\citeauthoryear{Adcock and Collier}{Adcock and
  Collier}{2001}]{adcock2001measurement}
Adcock, R. and D.~Collier (2001).
\newblock Measurement validity: A shared standard for qualitative and
  quantitative research.
\newblock {\em American Political Science Review\/}~{\em 95\/}(3), 529--546.

\bibitem[\protect\citeauthoryear{Carrubba, Gabel, Murrah, Clough, Montgomery,
  and Schambach}{Carrubba et~al.}{2006}]{carrubba2006off}
Carrubba, C.~J., M.~Gabel, L.~Murrah, R.~Clough, E.~Montgomery, and
  R.~Schambach (2006).
\newblock Off the record: Unrecorded legislative votes, selection bias and
  roll-call vote analysis.
\newblock {\em British Journal of Political Science\/}~{\em 36\/}(4), 691--704.

\bibitem[\protect\citeauthoryear{Clinton, Jackman, and Rivers}{Clinton
  et~al.}{2004}]{clinton2004statistical}
Clinton, J.~D., S.~Jackman, and D.~Rivers (2004).
\newblock The statistical analysis of roll call data.
\newblock {\em American Political Science Review\/}~{\em 98\/}(2), 355--370.

\bibitem[\protect\citeauthoryear{Elff}{Elff}{2013}]{elff2013dynamic}
Elff, M. (2013).
\newblock A dynamic state-space model of coded political texts.
\newblock {\em Political Analysis\/}~{\em 21\/}(2), 217--232.

\bibitem[\protect\citeauthoryear{{General Secretariat of the Government of the
  Republic of Slovenia}}{{General Secretariat of the Government of the Republic
  of Slovenia}}{2026}]{gsv2026vote}
{General Secretariat of the Government of the Republic of Slovenia} (2026,
  March).
\newblock Record of voting on item 6 of the 194th regular session: Proposed
  intervention of the {Republic of Slovenia} in {South Africa v. Israel}.
\newblock Internal cabinet record dated March 12, 2026, subsequently disclosed
  following proceedings before the Information Commissioner.

\bibitem[\protect\citeauthoryear{Grimmer and Stewart}{Grimmer and
  Stewart}{2013}]{grimmer2013text}
Grimmer, J. and B.~M. Stewart (2013).
\newblock Text as data: The promise and pitfalls of automatic content analysis
  methods for political texts.
\newblock {\em Political Analysis\/}~{\em 21\/}(3), 267--297.

\bibitem[\protect\citeauthoryear{Hoffman and Gelman}{Hoffman and
  Gelman}{2014}]{hoffman2014nuts}
Hoffman, M.~D. and A.~Gelman (2014).
\newblock The {No-U-Turn Sampler}: Adaptively setting path lengths in
  {Hamiltonian Monte Carlo}.
\newblock {\em Journal of Machine Learning Research\/}~{\em 15\/}(47),
  1593--1623.

\bibitem[\protect\citeauthoryear{Hug}{Hug}{2010}]{hug2010selection}
Hug, S. (2010).
\newblock Selection effects in roll call votes.
\newblock {\em British Journal of Political Science\/}~{\em 40\/}(1), 225--235.

\bibitem[\protect\citeauthoryear{Kim, Londregan, and Ratkovic}{Kim
  et~al.}{2018}]{kim2018spatial}
Kim, I.~S., J.~Londregan, and M.~Ratkovic (2018).
\newblock Estimating spatial preferences from votes and text.
\newblock {\em Political Analysis\/}~{\em 26\/}(2), 210--229.

\bibitem[\protect\citeauthoryear{Martin and Quinn}{Martin and
  Quinn}{2002}]{martin2002dynamic}
Martin, A.~D. and K.~M. Quinn (2002).
\newblock Dynamic ideal point estimation via markov chain monte carlo for the
  {U.S.} supreme court, 1953--1999.
\newblock {\em Political Analysis\/}~{\em 10\/}(2), 134--153.

\bibitem[\protect\citeauthoryear{Mikhaylov, Laver, and Benoit}{Mikhaylov
  et~al.}{2012}]{mikhaylov2012coder}
Mikhaylov, S., M.~Laver, and K.~R. Benoit (2012).
\newblock Coder reliability and misclassification in the human coding of party
  manifestos.
\newblock {\em Political Analysis\/}~{\em 20\/}(1), 78--91.

\bibitem[\protect\citeauthoryear{Poirier}{Poirier}{1998}]{poirier1998beliefs}
Poirier, D.~J. (1998).
\newblock Revising beliefs in nonidentified models.
\newblock {\em Econometric Theory\/}~{\em 14\/}(4), 483--509.

\bibitem[\protect\citeauthoryear{Poole and Rosenthal}{Poole and
  Rosenthal}{1985}]{poole1985spatial}
Poole, K.~T. and H.~Rosenthal (1985).
\newblock A spatial model for legislative roll call analysis.
\newblock {\em American Journal of Political Science\/}~{\em 29\/}(2),
  357--384.

\bibitem[\protect\citeauthoryear{Reuning, Kenwick, and Fariss}{Reuning
  et~al.}{2019}]{reuning2019dynamics}
Reuning, K., M.~R. Kenwick, and C.~J. Fariss (2019).
\newblock Exploring the dynamics of latent variable models.
\newblock {\em Political Analysis\/}~{\em 27\/}(4), 503--517.

\bibitem[\protect\citeauthoryear{Rubin}{Rubin}{1976}]{rubin1976inference}
Rubin, D.~B. (1976).
\newblock Inference and missing data.
\newblock {\em Biometrika\/}~{\em 63\/}(3), 581--592.

\bibitem[\protect\citeauthoryear{Vehtari, Gelman, Simpson, Carpenter, and
  B{\"u}rkner}{Vehtari et~al.}{2021}]{vehtari2021rank}
Vehtari, A., A.~Gelman, D.~Simpson, B.~Carpenter, and P.-C. B{\"u}rkner (2021).
\newblock Rank-normalization, folding, and localization: An improved
  {$\widehat{R}$} for assessing convergence of {MCMC}.
\newblock {\em Bayesian Analysis\/}~{\em 16\/}(2), 667--718.

\bibitem[\protect\citeauthoryear{West and Harrison}{West and
  Harrison}{1997}]{west1997bayesian}
West, M. and J.~Harrison (1997).
\newblock {\em Bayesian Forecasting and Dynamic Models\/} (2 ed.).
\newblock New York: Springer.

\end{thebibliography}
\end{document}


\maketitle
\tableofcontents
\newpage

\section{Scope and Inferential Workflow}

This supplement is the paper's audit trail. It separates four objects that would otherwise be easy to conflate: a latent public-position state, the public observation and coding process, a proposal-specific behavioral link, and prediction of later public signals. Table~\ref{tabS:claimmap} states what updates each object, how it is reported, and which stronger claim is ruled out.

\begin{table}[!htbp]
\centering
\caption{Inferential Objects, Evidence, and Claim Boundaries}
\label{tabS:claimmap}
\small
\begin{tabularx}{\textwidth}{P{0.18\textwidth}P{0.25\textwidth}P{0.24\textwidth}Y}
\toprule
Object & Evidence & Reported quantity & Excluded claim \\
\midrule
Public-position state, $\theta_{it}$ & 36 dated direct or attributed ordinal records for six of 21 ministers & Calibration-conditional posterior state and uncertainty & Private preference, neutrality of silent actors, or proposal-specific choice \\
\addlinespace[0.25em]
Public observability and coding & Complete 41-row restricted-source inventory, event clusters, rule check, and declared perturbations & Coverage diagnostics and sensitivity analyses & Exhaustive communication census, ignorable selection, or human reliability before an external return \\
\addlinespace[0.25em]
Behavioral link, $\lambda$ & No behavioral outcome enters the measurement likelihood & Prior-invariance result and envelopes over a declared mapping class & Estimated participation, vote probability, or sharp identified set \\
\addlinespace[0.25em]
Future public signal & Twelve rolling-origin actor-event targets, using earlier events only & Descriptive proper scores and their across-target uncertainty & Prospective secret-vote validation or predictive dominance \\
\bottomrule
\end{tabularx}
\begin{minipage}{0.96\textwidth}
\footnotesize
\textit{Note:} The empirical corpus covers official GOV.SI and Levica sources published no later than 11 March 2026. It was reconstructed after disclosure of the voting record. Temporal freezing excludes post-decision material from estimation but does not convert the reconstruction into a prospective design.
\end{minipage}
\end{table}

The workflow follows this separation. Section~S3 defines the roster, source universe, coding rules, and actor-event construction. Section~S4 gives the state process, ordinal likelihood, calibration, and behavioral nonidentification result. Sections~S5 and S6 report implementation checks and simulation evidence. Sections~S7--S9 report the empirical posterior, rolling-origin signal prediction, and sensitivity analyses. Section~S10 uses the subsequently disclosed record only as a qualitative diagnostic contrast. Sections~S11--S13 document reproducibility, verification status, and remaining limits.

Version 1.0 publishes source URLs, titles, paraphrased summaries, coding rationales, actor-event identifiers, and a separate outcome-blind rule-based recoding. Repeated communications on the same actor-specific episode remain visible but are addressed through event-tempered and event-deduplicated sensitivities. A randomized, source-linked packet is supplied for independent human recoding; because that return is not yet available, no human intercoder statistic is reported.

The computational audit has five layers. A 100-replication, seven-condition benchmark compares the dynamic estimator with a static ordinal model and transparent signal summaries. A separate maintained-engine experiment fits four-chain PyMC NUTS models to 20 datasets in each of four focal conditions. The empirical item-level and actor-event specifications use the same NUTS settings. A 100-state log-density test checks the executed target against an independent implementation, and JAX NUTS plus elliptical-slice results provide cross-algorithm comparisons. Rolling-origin prediction then evaluates later public actor-event signals---never the concealed vote.

\section{Methodological Location and Scope}

The framework composes established elements where their separation is substantively consequential. The latent-state component follows dynamic Bayesian measurement \citep{martin2002dynamic,west1997bayesian,elff2013dynamic}; the observation equation follows ordinal latent-response logic \citep{albert1993bayesian}. Joint text--vote models can use observed choices to anchor a shared scale \citep{kim2018spatial}. Here, the target behavior is concealed at the point of estimation, so that anchor is unavailable by design.

The corpus protocol makes construct validity and coding transparency part of the estimand rather than treating them as innocuous preprocessing \citep{adcock2001measurement,grimmer2013text,mikhaylov2012coder}. Observability may depend on the latent state or on actor characteristics \citep{rubin1976inference}, and the availability of recorded political choices can itself be selected \citep{carrubba2006off,hug2010selection}. Evaluation therefore distinguishes state recovery, interval coverage, and out-of-sample signal prediction \citep{shmueli2010explain,cranmer2017predictive,gneiting2007strictly}.

No component is claimed as unprecedented in isolation. The contribution is a disciplined composition for settings with public traces but unavailable behavior: define a narrow public-position estimand, expose the information set, preserve dependence and coding uncertainty, benchmark against simple alternatives, and prove when the measurement likelihood cannot update a behavioral mapping. The resulting envelopes are declared sensitivity summaries. They are not substitutes for observed choices.

\section{Data Files and Temporal Construction}

\subsection{Decision-day roster}

The decision-day roster contains 21 ministers. The actor file records a stable numeric identifier, the minister's name, formal entry date, entry quarter, and broad political grouping. From 2022Q2 through 2026Q1, tenure adjustment produces 247 actor-quarter observations rather than the 336 cells implied by assigning all 21 decision-day ministers to every quarter. Extending the descriptive panel through 2026Q2 would produce 268 actor-quarters, but Q2 2026 is excluded from the frozen fit.

\begin{longtable}{rlll}
\caption{Decision-Day Roster and Entry Timing}\label{tabS:roster}\\
\toprule
ID & Minister & Entry date & Entry quarter \\
\midrule
\endfirsthead
\toprule
ID & Minister & Entry date & Entry quarter \\
\midrule
\endhead
1 & Robert Golob & 1 June 2022 & 2022Q2 \\
2 & Tanja Fajon & 1 June 2022 & 2022Q2 \\
3 & Luka Mesec & 1 June 2022 & 2022Q2 \\
4 & Klemen Boštjančič & 1 June 2022 & 2022Q2 \\
5 & Matej Arčon & 1 June 2022 & 2022Q2 \\
6 & Matjaž Han & 1 June 2022 & 2022Q2 \\
7 & Igor Papič & 1 June 2022 & 2022Q2 \\
8 & Asta Vrečko & 1 June 2022 & 2022Q2 \\
9 & Bojan Kumer & 1 June 2022 & 2022Q2 \\
10 & Aleksander Jevšek & 1 June 2022 & 2022Q2 \\
11 & Alenka Bratušek & 24 January 2023 & 2023Q1 \\
12 & Simon Maljevac & 24 January 2023 & 2023Q1 \\
13 & Valentina Prevolnik Rupel & 13 October 2023 & 2023Q4 \\
14 & Franc Props & 7 December 2023 & 2023Q4 \\
15 & Jože Novak & 7 December 2023 & 2023Q4 \\
16 & Mateja Čalušić & 12 January 2024 & 2024Q1 \\
17 & Borut Sajovic & 7 October 2024 & 2024Q4 \\
18 & Vinko Logaj & 7 October 2024 & 2024Q4 \\
19 & Ksenija Klampfer & 17 December 2024 & 2024Q4 \\
20 & Andreja Kokalj & 21 November 2025 & 2025Q4 \\
21 & Branko Zlobko & 21 November 2025 & 2025Q4 \\
\bottomrule
\end{longtable}

\subsection{Signal-level schema}

The file \texttt{data/signals.csv} contains one row per reconstructed communication or organizational act. Its principal fields are:

\begin{longtable}{P{0.23\textwidth}P{0.69\textwidth}}
\toprule
Field & Definition \\
\midrule
\endfirsthead
\toprule
Field & Definition \\
\midrule
\endhead
\texttt{signal\_id} & Stable row identifier. \\
\texttt{actor\_id} & Minister identifier for direct and attributed signals; blank for organizational signals. \\
\texttt{publication\_date} & Public availability date in ISO format. \\
\texttt{period\_id} & Calendar quarter used in the state-space model. \\
\texttt{source\_type} & GOV.SI or Levica in version 1.0. \\
\texttt{signal\_mode} & Direct, attributed, or organizational. \\
\texttt{ordinal\_code} & Directional category in $\{-2,-1,0,1,2\}$. \\
\texttt{source\_title} & Published title of the webpage. \\
\texttt{source\_url} & Permanent source address. \\
\texttt{signal\_summary} & Paraphrase of the relevant content. \\
\texttt{coding\_rationale} & Reason for the assigned category. \\
\texttt{predecision} & Indicator that the source predates the 12 March 2026 decision. \\
\bottomrule
\end{longtable}

\subsection{Search and inclusion protocol}

The declared source universe is restricted to GOV.SI and Levica. The reconstruction searched combinations of every decision-day minister's name with Israel, Gaza, Palestine, ICJ, international law, recognition, sanctions, arms, ceasefire, and humanitarian law. The analytical freeze admits only material published by 11 March 2026, the day before the decision. A page was retained when its public content could bear on the direction of a minister's position toward international legal or diplomatic action involving Israel. Purely humanitarian description without a defensible policy implication was coded zero or excluded. Collective government or party acts without unique ministerial attribution were retained as organizational records but excluded from the preferred individual likelihood.

The unit is a distinct public page or act, identified by a stable \texttt{signal\_id}. Multiple communications concerning one continuing policy episode are not silently collapsed: their original rows remain in \texttt{signals.csv}, while \texttt{event\_id} records actor-specific episode membership for dependence analyses. The corpus preserves source title, URL, paraphrased content, and coding rationale so that an auditor can reconstruct every inclusion. Inaccessible links encountered during independent recoding must be recorded and repaired or archived; they may not be replaced through an undisclosed search.

This protocol is reproducible but deliberately narrow. It does not exhaust newspaper interviews, parliamentary exchanges, broadcasts, social-media posts, or deleted material. Only six ministers have a recovered individual record. That negative search result describes coverage under the stated official-source protocol; it is neither a missing-at-random assumption nor evidence that other ministers were substantively neutral.

\subsection{Public-signal sampling frame and unobserved private communication}

The empirical sampling frame contains only attributable communications retrievable from the declared public-source universe. Private meetings, cabinet deliberations, informal conversations, direct messages, unpublished memoranda, and undocumented exchanges are outside it. Their number, direction, timing, and distribution are unknown. Consequently, the retrieved item count is a measure of documented public evidence under this protocol, not an estimate of an actor's total communication.

A zero in the actor-period matrix means only that the protocol recovered no attributable public item for that actor and quarter. It does not imply no position, no private communication, no internal participation, or ideological centrism. No private signal is sought, imputed, treated as observed, or used for empirical validation. Simulation conditions vary a stylized public observation process to evaluate estimator behavior; they do not estimate the empirical quantity of private communication.

\subsection{Coding rules}

The scale is tied to the specific intervention dimension:

\begin{table}[!htbp]
\centering
\caption{Ordinal Coding Rules}
\label{tabS:codebook}
\begin{tabularx}{\textwidth}{c l Y}
\toprule
Code & Category & Rule \\
\midrule
$2$ & Explicit support & Endorsement of ICJ authority or participation, sanctions, embargoes, diplomatic isolation, or similarly forceful legal action. \\
$1$ & Qualified support & Support for recognition of Palestine, international-law enforcement, coordinated pressure, ceasefire, or humanitarian restraint without endorsing the exact intervention. \\
$0$ & Balanced or indeterminate & Relevant statement without a defensible directional implication for intervention. \\
$-1$ & Qualified opposition & Procedural caution, restraint, or diplomatic objection weighing against intervention. \\
$-2$ & Explicit opposition & Direct rejection of intervention, ICJ involvement, sanctions, or comparable action. \\
\bottomrule
\end{tabularx}
\end{table}

The reconstructed individual component of the restricted official-source corpus contains one zero, 23 category-one signals, and 12 category-two signals. It contains no recovered negative individual signal. This is a property of the restricted official-source corpus, not a claim about private positions or the absence of cabinet opposition.

The five labels are ordered, not cardinal measurements. The model never treats the numerical gap between adjacent raw codes as an observed equal interval. Instead, it maps the ordering through four declared cut-points in an ordered-logistic likelihood. The one-sided empirical distribution is therefore a genuine information limitation: the data can distinguish degrees of publicly supportive content more readily than they can locate an opposition pole that is never observed in the restricted corpus.

\subsection{Algorithmic consistency check and independent human recoding}

The archive retains an outcome-blind rule-based recoding generated from randomized titles and author-produced summaries under a declared lexical rubric. Among the 36 direct or attributed items, it agrees exactly with the original code on 35, remains within one category on all 36, and yields quadratic weighted $\kappa=0.947$ and linear weighted $\kappa=0.943$. This is an algorithmic consistency diagnostic only. It is not human intercoder reliability, because the rule operates on text fields produced during corpus construction and cannot supply an independent human interpretation of the original public communication.

The preferred reliability exercise instead treats the author and one qualified non-author as two human coders. The external coder receives a randomized 41-item packet containing links to the original public pages, source titles, actor attribution when it appears in the public source, source mode, and blank coding fields. The coder does not receive the original category, rationale, event identifier, model output, manuscript results, disclosed vote, or private crosswalk. Preserving the full public page takes precedence over artificial identity blinding; the design is outcome-blinded and code-blinded, and the coder attests that the disclosed individual outcomes were unknown during coding.

The codebook defines the target as the position publicly expressed in the linked communication, not a private preference, likely attendance, or secret vote. Coders are instructed neither to seek nor infer private meetings, unpublished messages, or undocumented signals. Source-access problems are recorded rather than replaced by outside searches. The coder also records confidence, the passage or page location supporting the category, access date, source language, and any translation used.

The returned packet and attestation are hashed and timestamped before the private crosswalk is opened. The unadjudicated analysis reports exact and within-one-category agreement, mean absolute category difference, linear and quadratic weighted $\kappa$, ordinal Krippendorff's $\alpha$, and actor-event cluster-bootstrap intervals, separately for all 41 records and the 36 direct or attributed records. Disagreements are frozen in a machine-readable file. Any adjudication is a subsequent, separately logged stage and never replaces the unadjudicated reliability statistics. The merge script refuses to report human reliability unless all 41 classifications, source-evidence fields, and the signed independence attestation pass validation.

\paragraph{External-coding status.} The source-linked packet and integrity checks are complete, but no human-reliability statistic is reported until the qualified non-author returns all 41 classifications and signs the independence attestation.%

\subsection{Actor-event construction}

Signals are clustered into actor-specific events using date, quarter, policy episode, and substantive continuity. For example, Fajon's repeated October--December 2023 ceasefire diplomacy forms one event, her January 2024 ICJ statements another, and her May 2024 Rafah statements a third. Golob's recognition-of-Palestine communications and Vrečko's sanctions-related communications are treated analogously. The resulting machine-readable file \path{data/actor_event_signals.csv} contains one row per actor-event.

Two robustness estimators address within-event dependence. The event-tempered analysis raises each item likelihood to the reciprocal of its actor-event size, so every event contributes total log-score weight one. Because fractional powers define a generalized-Bayes sensitivity target rather than a fully specified generative dependence model, its intervals are interpreted only as a robustness comparison. The event-deduplicated specification instead replaces each event with its median ordinal category, rounds half-category ties toward zero, and uses the mean mode discrimination. Both preserve temporal order while preventing repeated statements from mechanically multiplying precision; neither purports to estimate an unknown within-event correlation structure from 18 events.

\subsection{Complete source inventory}

\begingroup
\footnotesize
\setlength{\tabcolsep}{3pt}
\begin{longtable}{@{}r >{\raggedright\arraybackslash}p{0.10\textwidth} >{\raggedright\arraybackslash}p{0.16\textwidth} >{\raggedright\arraybackslash}p{0.13\textwidth} r >{\raggedright\arraybackslash}p{0.40\textwidth}@{}}
\caption{Complete Version 1.0 Signal Inventory}\label{tabS:inventory}\\
\toprule
ID & Date & Actor or organization & Mode & Code & Source title \\
\midrule
\endfirsthead
\toprule
ID & Date & Actor or organization & Mode & Code & Source title \\
\midrule
\endhead
1 & 2023-10-10 & Tanja Fajon & direct & 0 & We must give people in the Middle East hope for peace \tabularnewline
2 & 2023-10-23 & Tanja Fajon & direct & 1 & Minister Fajon calls for a humanitarian ceasefire \tabularnewline
3 & 2023-11-13 & Tanja Fajon & direct & 1 & Crisis of humanity in Gaza \tabularnewline
4 & 2023-11-20 & Tanja Fajon & direct & 1 & Immediate humanitarian ceasefire in Gaza \tabularnewline
5 & 2023-11-22 & Tanja Fajon & direct & 1 & EU Mediterranean countries must raise their voice \tabularnewline
6 & 2023-11-24 & Tanja Fajon & direct & 1 & Peace plan should lead to recognition of a Palestinian state \tabularnewline
7 & 2023-11-27 & Tanja Fajon & direct & 1 & We all want peace \tabularnewline
8 & 2023-11-29 & Tanja Fajon & direct & 1 & A lasting ceasefire in Gaza is urgent \tabularnewline
9 & 2023-12-06 & Tanja Fajon & direct & 1 & Respect international humanitarian law and humanitarian pause \tabularnewline
10 & 2024-01-11 & Tanja Fajon & direct & 2 & Slovenia in the ICJ advisory-opinion proceedings \tabularnewline
11 & 2024-01-17 & Tanja Fajon & direct & 2 & It is high time to end suffering in Gaza \tabularnewline
12 & 2024-05-05 & Tanja Fajon & direct & 1 & Calls for dialogue and for Israel not to attack Rafah \tabularnewline
13 & 2024-05-12 & Tanja Fajon & direct & 1 & Israel and Hamas must resume talks \tabularnewline
14 & 2024-05-14 & Tanja Fajon & direct & 1 & International community must come together \tabularnewline
15 & 2024-05-27 & Tanja Fajon & direct & 2 & Condemnation of the Rafah refugee-camp attack \tabularnewline
16 & 2024-06-05 & Tanja Fajon & direct & 2 & Slovenia and Palestine establish diplomatic relations \tabularnewline
17 & 2024-09-19 & Tanja Fajon & direct & 1 & West Bank at boiling point \tabularnewline
18 & 2024-10-07 & Tanja Fajon & direct & 1 & The key to peace is to address root causes \tabularnewline
19 & 2024-10-14 & Tanja Fajon & direct & 2 & Further steps in crisis areas \tabularnewline
20 & 2023-11-09 & Robert Golob & direct & 1 & A ceasefire is essential for civilian aid \tabularnewline
21 & 2024-04-16 & Robert Golob & attributed & 1 & Responsibility in the international community \tabularnewline
22 & 2024-05-27 & Robert Golob & direct & 1 & Statement on the recognition of Palestine \tabularnewline
23 & 2024-05-30 & Robert Golob & attributed & 1 & Halfway through the term of office \tabularnewline
24 & 2025-05-16 & Robert Golob & direct & 1 & Europe must respond with cohesion and openness \tabularnewline
25 & 2025-05-29 & Robert Golob & direct & 2 & Joint statement by Slovenia and Spain \tabularnewline
26 & 2025-06-04 & Robert Golob & direct & 1 & Anniversary of recognition of Palestine \tabularnewline
27 & 2025-09-22 & Robert Golob & direct & 1 & Recognition of Palestine by additional countries \tabularnewline
28 & 2026-01-19 & Robert Golob & direct & 2 & We want to be the voice of reason \tabularnewline
29 & 2026-02-06 & Robert Golob & direct & 1 & Principled foreign policy opens doors to cooperation \tabularnewline
30 & 2024-05-30 & Asta Vrečko & attributed & 1 & Halfway through the term of office \tabularnewline
31 & 2024-06-04 & Asta Vrečko & direct & 2 & Slovenia recognized Palestine \tabularnewline
32 & 2025-09-25 & Asta Vrečko & direct & 2 & Netanyahu prohibited from entering Slovenia \tabularnewline
33 & 2025-10-04 & Asta Vrečko & direct & 2 & Eighth regular congress of Levica \tabularnewline
34 & 2024-06-04 & Luka Mesec & direct & 2 & Slovenia recognized Palestine \tabularnewline
35 & 2024-06-04 & Simon Maljevac & direct & 2 & Slovenia recognized Palestine \tabularnewline
36 & 2024-05-30 & Matjaž Han & attributed & 1 & Halfway through the term of office \tabularnewline
37 & 2023-10-17 & Government of Slovenia & organizational & 1 & Slovenia increases humanitarian aid for Gaza \tabularnewline
38 & 2024-05-09 & Government of Slovenia & organizational & 1 & 102nd regular session of the Government \tabularnewline
39 & 2025-07-31 & Government of Slovenia & organizational & 2 & Prohibition of weapons trade with Israel \tabularnewline
40 & 2024-10-04 & Levica & organizational & 2 & Weapons for Israel must not pass through Koper \tabularnewline
41 & 2024-11-23 & Levica & organizational & 2 & Stop procurement of Israeli weapons \tabularnewline
\bottomrule
\end{longtable}
\endgroup

The CSV file contains the corresponding URLs, summaries, and coding rationales. The inventory table is not a substitute for the machine-readable corpus. Exact quarterly counts of direct and attributed records are preserved in \path{output/stata/quarterly_signal_counts.csv}; the five organizational records are excluded. Empty quarters mean that the declared search protocol recovered no attributable public item, not neutrality or an absence of private communication.

\section{Statistical Model}

\subsection{Latent public-position state process}

For each actor $i$, the initial public-position state and innovations satisfy
\begin{align}
\theta_{i1}&\sim\Normal(0,\sigma_0^2),\\
\theta_{it}-\theta_{i,t-1}&\sim\Normal(0,\tau^2),\qquad t=2,\ldots,T.
\end{align}
The preferred calibration fixes $\sigma_0=1$, $\tau=0.30$, and $T=16$ quarters from 2022Q2 through 2026Q1. The implied prior standard deviation at period $t$ is $\sqrt{\sigma_0^2+(t-1)\tau^2}$, so uncertainty grows in the absence of observations.

The model uses this common calendar grid for all decision-day actors. Roster entry dates determine the descriptive minister-quarter risk set in Section~S3.1 but do not truncate the latent measurement process: $\theta_{it}$ denotes an actor's potentially measurable public orientation, not an office-tenure outcome. No retained attributable record predates its actor's cabinet entry. For a completely silent actor, however, the common-grid convention implies the same decision-time prior regardless of entry date. This convention is disclosed because a tenure-started prior would narrow intervals for late entrants without adding evidence.

The prior is independent across actors. Hierarchical party pooling would narrow the posterior of silent ministers by transferring information from vocal copartisans. That may be useful with many actors and repeated decisions, but here it would blur the distinction between individual and organizational evidence and would create apparently individualized precision from group assumptions.

\subsection{Ordered-logistic measurement}

Let $\widetilde y_r\in\{1,\ldots,5\}$ index the ordered categories $-2,\ldots,2$, let $\eta_r=\beta_{m(r)}\theta_{i(r),t(r)}$, and set $\kappa_0=-\infty$ and $\kappa_5=\infty$. Then
\begin{align}
\Pr(\widetilde y_r\leq j\mid\theta)&=\invlogit(\kappa_j-\eta_r),\qquad j=1,\ldots,4,\\
p_{rj}(\theta)&=\invlogit(\kappa_j-\eta_r)-\invlogit(\kappa_{j-1}-\eta_r).
\end{align}
The item-level log likelihood is $\sum_r\log p_{r,\widetilde y_r}(\theta)$. Fixed symmetric cut-points $(-1.5,-0.5,0.5,1.5)$ place category zero around a predictor of zero; direct signals have discrimination one and attributed signals 0.75. These values are public calibration inputs, not estimated parameters. The direct-only sensitivity removes all four attributed records and therefore checks whether results depend on their lower discrimination assignment.

The model differs from a conventional item-response model in two respects. Signals are not repeated common items administered to every actor, and their production is not controlled by the analyst. The model should therefore be understood as a dynamic ordinal measurement model rather than a standard legislative IRT model.

\subsection{Measurement calibration and interpretation}

Fixed cut-points, fixed discrimination, zero-centered initial priors, and positive signal polarity orient and scale the posterior; no ex post sign reversal is performed. Symmetric, equally spaced thresholds provide a transparent reference scale, but the absolute unit has no natural empirical anchor in this sparse non-common-item design. Affine changes to the latent scale accompanied by corresponding changes to cut-points and discrimination can preserve category probabilities. The state-volatility and source-mode sensitivities therefore assess consequential calibration choices rather than claiming that the corpus estimates them.

For actors with observations, the likelihood updates a proper calibrated prior. For actors without observations, Bayes' rule leaves the complete state-path posterior equal to its state-process prior because their likelihood contribution is constant. A posterior mean near zero for such an actor is therefore prior centering, not evidence of moderation. The estimand is a corpus-conditional public position, not why a minister spoke, a silent minister's private preference, or the rule mapping public position into a secret choice.

\subsection{Behavioral nonidentification}

Let $Y$ denote observed public signals, $\Theta$ latent trajectories, $O$ the concealed institutional outcome, $\psi$ the measurement parameters, and $\lambda$ parameters mapping $\Theta$ into participation and vote choice.

\begin{proposition}
Suppose (i) $p(Y\mid\Theta,\psi)$ contains no $\lambda$; (ii) $O$ is entirely unobserved and $\sum_o p(O=o\mid\Theta,\lambda)=1$; and (iii) $p(\lambda,\Theta,\psi)=p(\lambda)p(\Theta,\psi)$. Then $p(\lambda\mid Y)=p(\lambda)$.
\end{proposition}

\begin{proof}
Marginalizing the entirely unobserved outcome gives
\begin{align*}
p(\lambda\mid Y)
&\propto p(\lambda)\int\!\int p(Y\mid\Theta,\psi)p(\Theta,\psi)
\underbrace{\sum_o p(O=o\mid\Theta,\lambda)}_{=1}\,d\Theta\,d\psi\\
&=p(\lambda)m(Y),
\end{align*}
where $m(Y)$ is constant in $\lambda$. Normalization yields the result.
\end{proof}

The result concerns learning from the stated likelihood, not logical relevance. Under a correlated prior, updating $\Theta$ or $\psi$ with $Y$ can move the marginal distribution of $\lambda$, but that movement is inherited from the stipulated coupling rather than an observed choice. If any component of $O$ were observed and included, if $\lambda$ entered the signal-production process, or if external behavioral data supplied another likelihood, the conclusion need not hold. None occurs in the preferred empirical analysis. Repeated observed decisions, externally validated behavioral information, or defensible structural restrictions would be required to learn the link.

\subsection{Behavioral calibration envelopes}

Let $\theta_i^\star=\theta_{iT}$ denote actor $i$'s decision-time state. The declared class is
\[
(a,k,\pi)\in[-0.5,0.5]\times[0.5,2]\times[0.5,0.9],
\]
with $q_i(a,k)=\invlogit(a+k\theta_i^\star)$ and coherent outcome probabilities $\pi q_i$, $\pi(1-q_i)$, and $1-\pi$. The proposal intercept $a$ permits baseline support or opposition, $k>0$ imposes monotonicity, and $\pi$ spans formal-participation rates from one-half to nine-tenths.

For each $c=(a,k,\pi)\in\mathcal C$, posterior draws of $\theta_i^\star$ imply an equal-tailed 95\% interval for each behavioral probability. The reported sensitivity envelope is the union of those intervals over the continuous class $\mathcal C$. Monotonicity makes the extrema available from the calibration endpoints and the 2.5th and 97.5th percentiles of $\theta_i^\star$; no grid approximation and no probability distribution over $\mathcal C$ are used. Because $\mathcal C$ is declared rather than learned from auxiliary data, the envelope is neither a posterior credible interval for behavior nor a sharp identified set. Readers can widen or shift the class in the public code.

\section{Computation and Direct NUTS Validation}

\subsection{Published posterior target and non-centered parameterization}

The item-level posterior is encoded independently in the executed PyMC model and in the published Stan reference program. Both use non-centered innovations,
\begin{align}
\theta_{i1}&=\sigma_0 z_{i1},\\
\theta_{it}&=\theta_{i,t-1}+\tau z_{it},\\
z_{it}&\sim\Normal(0,1),
\end{align}
and an ordered-logistic likelihood. The auditable Stan source is \path{stan/dynamic_ordinal_latent.stan}; the executed maintained-engine implementation is \path{code/standard_engine_common.py}. The preferred model assigns every observation weight one. The event-tempered sensitivity instead uses the generalized target
\[
p_{\mathrm{temp}}(\Theta\mid Y)\ \propto\ p(\Theta)\prod_r p(y_r\mid\Theta)^{w_r},
\qquad
w_r=\frac{1}{n_{\mathrm{event}(r)}}.
\]
Thus, weights sum to one within actor-event. This power-likelihood posterior is deliberately labeled as a sensitivity analysis; the item-level model remains the ordinary generative likelihood. The published target is ordered logistic throughout.

The non-centered form is important computationally because the state volatility and initial scale are fixed. The innovation variables have independent standard-normal priors, so a unit Euclidean metric is a natural baseline. The fixed cut-points, discrimination weights, $\sigma_0$, and $\tau$ are visible inputs rather than silently estimated hyperparameters.

\subsection{Four-chain direct NUTS protocol}

The formal standard-engine verification uses PyMC 5.27.1 and PyTensor 2.37.0 \citep{abrilpla2023pymc}. The same fixed-shape non-centered ordered-logistic posterior is fitted to 20 independently generated datasets under each of four focal conditions: baseline information, severe sparsity, weak discrimination, and multidimensional signals. These 80 fits are separate from the 100-replication Laplace benchmark experiment. The empirical item-level and actor-event specifications are fitted under the same protocol. Every maintained-engine fit uses four chains, 1,000 warm-up iterations, 1,000 retained draws, target acceptance 0.90, and maximum tree depth 10. A unit test evaluates the PyMC and independent NumPy log densities at 100 random states and finds a maximum absolute difference of $5.7\times10^{-13}$. Exact package versions, seeds, platform metadata, and replication-level diagnostics are retained. The executed build used ArviZ 1.1.0; a narrow import-compatibility layer exposes only the legacy names needed for PyMC's raw \texttt{MultiTrace} backend. Sampling is performed by PyMC and PyTensor, while rank-normalized diagnostics are computed directly from posterior arrays. The environment file pins the executed versions.

The supplied Stan program is a mathematically equivalent reference implementation rather than evidence that CmdStan was executed in the build environment. Independent JAX NUTS and elliptical-slice results are retained as cross-engine robustness checks. The maintained PyMC execution closes the standard-engine verification sequence without conflating reference code with completed output.

For every dataset the archive records rank-normalized $\widehat R$, bulk and tail effective sample sizes, divergences, maximum-tree-depth hits, mean acceptance statistic, leapfrog count, step size, energy Bayesian fraction of missing information (E-BFMI), and elapsed time. It also records decision-time RMSE, Spearman rank correlation, 95\% interval coverage, and mean interval width. The diagnostics use all $21\times16$ latent public-position states; decision-time diagnostics are retained separately in the machine-readable files.

\begin{table}[!htbp]
\centering
\caption{Four-Chain Direct NUTS Simulation Performance and Diagnostics}
\label{tabS:publication_nuts}
\scriptsize
\textit{Panel A: Decision-time recovery and interval performance}\\[0.35em]
\begin{tabular}{lrrrrr}
\toprule
Condition & Data sets & Signals & RMSE (SE) & Rank corr. (SE) & Coverage (SE) \\
\midrule
Baseline & 20 & 67.2 & 1.176 (0.033) & 0.644 (0.035) & 0.933 (0.009) \tabularnewline
Severe sparsity & 20 & 21.2 & 1.315 (0.041) & 0.411 (0.051) & 0.967 (0.008) \tabularnewline
Weak discrimination & 20 & 67.6 & 1.363 (0.055) & 0.511 (0.040) & 0.883 (0.016) \tabularnewline
Multidimensional signals & 20 & 66.0 & 1.224 (0.038) & 0.554 (0.038) & 0.931 (0.011) \tabularnewline
\bottomrule
\end{tabular}
\vspace{0.8em}

\textit{Panel B: Sampler diagnostics across all latent states}\\[0.35em]
\begin{tabular}{lrrrrrr}
\toprule
Condition & Max. $\widehat R$ & Min. bulk ESS & Min. tail ESS & Div. & TD hits & Min. E-BFMI \\
\midrule
Baseline & 1.009 & 3964 & 2033 & 0 & 0 & 0.889 \tabularnewline
Severe sparsity & 1.012 & 5484 & 1909 & 0 & 0 & 0.902 \tabularnewline
Weak discrimination & 1.009 & 4616 & 1730 & 0 & 0 & 0.879 \tabularnewline
Multidimensional signals & 1.008 & 4526 & 1864 & 0 & 0 & 0.913 \tabularnewline
\bottomrule
\end{tabular}
\begin{minipage}{0.96\linewidth}
\footnotesize
\textit{Note:} Each condition contains 20 independently generated data sets. Each data set is fitted with four chains, 1,000 warm-up iterations, and 1,000 retained draws per chain. Signals is the mean number of observed ordinal records; TD hits counts post-warm-up iterations reaching the configured maximum tree depth. RMSE, Spearman rank correlation, and equal-tailed 95\% interval coverage are evaluated at period 16. Parentheses contain Monte Carlo standard errors across data sets.
\end{minipage}
\end{table}

The maintained-engine results reinforce the broad benchmark experiment. Mean rank correlations are \PyMCBaselineRank\ (Monte Carlo SE \PyMCBaselineRankSE) in the baseline, \PyMCSevereRank\ (SE \PyMCSevereRankSE) under severe sparsity, \PyMCHighNoiseRank\ (SE \PyMCHighNoiseRankSE) under weak discrimination, and \PyMCMultiRank\ (SE \PyMCMultiRankSE) under multidimensional signals. Coverage is \PyMCBaselineCoverage, \PyMCSevereCoverage, \PyMCHighNoiseCoverage, and \PyMCMultiCoverage, respectively. Weak discrimination therefore remains the clearest source of undercoverage. Across all 80 data sets and 320 chains there are \PyMCTotalDivergences\ divergences and \PyMCTotalDepthHits\ maximum-tree-depth hits. The worst rank-normalized $\widehat R$ is \PyMCWorstRhat; the worst decision-time $\widehat R$ is \PyMCWorstDecisionRhat. Minimum bulk and tail effective sample sizes across all latent states are \PyMCMinBulkESS\ and \PyMCMinTailESS, and the minimum E-BFMI is \PyMCMinEBFMI. Runtime is retained for every data set but is not interpreted substantively because it depends on hardware and concurrent execution.

Table~\ref{tabS:publication_nuts} reports rank recovery, interval coverage, Monte Carlo uncertainty, and sampler diagnostics for the maintained-engine experiment. The complete replication-level file preserves variation across data sets and permits inspection of every fit rather than only condition averages.

\subsection{Direct NUTS fit of the restricted official-source corpus}

The preferred item-level empirical model is fitted with the same four-chain, 1,000-warm-up, 1,000-retained-draw protocol. The event-tempered and event-deduplicated specifications are re-fitted independently rather than evaluated by plugging alternative inputs into the preferred posterior. Table~\ref{tabS:empirical_nuts} reports maintained-engine diagnostics for these three empirical specifications. The algorithmic recoding remains a separate consistency sensitivity and is not used to define completion of the standard-engine sequence.

\begin{table}[!htbp]
\centering
\caption{Direct NUTS Diagnostics for the Empirical and Sensitivity Fits}
\label{tabS:empirical_nuts}
\scriptsize
\begin{tabular}{lrrrrrrrr}
\toprule
Specification & $N$ & Max. $\widehat R$ & Bulk ESS & Tail ESS & Div. & TD & E-BFMI & Accept. \\
\midrule
Item level & 36 & 1.006 & 4206 & 2147 & 0 & 0 & 0.950 & 0.891 \tabularnewline
Event tempered & 36 & 1.007 & 4491 & 1969 & 0 & 0 & 0.920 & 0.890 \tabularnewline
Event deduplicated & 18 & 1.007 & 3867 & 2119 & 0 & 0 & 0.978 & 0.885 \tabularnewline
\end{tabular}
\begin{minipage}{0.94\textwidth}
\footnotesize
\textit{Note:} $N$ is the number of ordinal records in each fit; all three specifications use 21 actors and 16 quarters. Diagnostics cover all 336 latent states. The event-tempered specification is the fractional power-likelihood sensitivity defined above; the event-deduplicated specification retains one median-coded record per actor-event. Human reliability belongs to the separate external-coding protocol.
\end{minipage}
\end{table}

Two independent algorithms check the preferred empirical posterior. First, a recursive slice-based NUTS implementation with the same four-chain, 1,000-warm-up, and 1,000-retained-draw protocol has maximum rank-normalized $\widehat R=1.010$, minimum bulk and tail effective sample sizes of 1,642 and 1,253, no divergences or tree-depth hits, and minimum E-BFMI of 1.348. Relative to the maintained-engine PyMC fit, its 21 decision-time posterior means correlate at 0.999 (SE 0.011), the mean absolute difference is 0.033 latent units, and the mean posterior-standard-deviation ratio is 0.996. Second, an independent four-chain elliptical-slice fit yields a posterior-mean correlation of 0.997 (SE 0.018), mean absolute difference of 0.043, and mean standard-deviation ratio of 1.005. Correlation SEs use the journal's conventional $[(1-r^2)/(n-2)]^{1/2}$ formula with $n=21$. Agreement across algorithms is evidence about implementation, not evidence that the fixed substantive calibration is correct.

\subsection{Laplace approximation used for broad benchmark comparisons}

The seven-condition, 100-replication benchmark experiment requires repeated fitting of the dynamic and static ordinal models. It therefore obtains each actor trajectory's posterior mode and uses the inverse negative Hessian as a Gaussian approximation. Factorization across actors makes the approximation inexpensive and transparent.

The approximation was checked against elliptical-slice sampling on 20 independently generated baseline datasets. The average within-dataset correlation between posterior means is 0.987 (Monte Carlo SE 0.001). The mean absolute difference is 0.126, and the average ratio of Laplace to elliptical-slice posterior standard deviations is 1.001. The separate direct NUTS experiment above ensures that the paper does not rely on this approximation for all evidence about finite-sample coverage or sampler behavior.

\section{Monte Carlo Design and Full Results}

\subsection{Data-generating process}

Each data set contains 21 actors and 16 periods. Latent states follow the fitted random walk with $\sigma_0=1$ and $\tau=0.30$. In the baseline, each actor-period independently produces one public signal with probability 0.20, and its category follows the ordered-logistic model with discrimination one and the fitted cut-points. Estimation fixes those same values unless the condition is expressly misspecified.

The conditions are as follows:

\begin{table}[!htbp]
\centering
\caption{Simulation Conditions}
\label{tabS:conditions}
\begin{tabularx}{\textwidth}{lY}
\toprule
Condition & Modification \\
\midrule
Baseline & Signal probability 0.20 and correctly specified one-dimensional measurement. \\
Severe sparsity & Signal probability 0.06. \\
Dense & Signal probability 0.40. \\
Extremity-dependent visibility & $\operatorname{logit}(p_{it})=\operatorname{logit}(0.20)+0.75|\theta_{it}|+0.30\theta_{it}$; the fitted likelihood omits this selection. \\
Weak discrimination & True signal discrimination falls from 1 to 0.55 while the fitted calibration remains 1. \\
Turnover & Seven actors enter at a uniformly drawn period from 5 through 12; pre-entry cells are outside the risk set. \\
Multidimensional & Signal content loads on $\theta_{it}+0.8\gamma_{it}$, where $\gamma_{it}$ is an omitted independent random walk. \\
\bottomrule
\end{tabularx}
\end{table}

The extremity-dependent-visibility design is not an evaluation of a fitted selection equation. It asks how the estimator behaves when public observation depends on the latent state but the fitted likelihood remains ignorable. Because this mechanism also raises the mean number of observed signals, good recovery in that condition cannot be interpreted as evidence that selection is harmless. The multidimensional design is intentionally misspecified and tests whether uncertainty remains calibrated when a second signal dimension is omitted.

\subsection{Complete simulation table}

\begin{table}[!htbp]
\centering
\caption{Finite-Sample Performance of the Dynamic Ordinal Estimator}
\label{tabS:simulation}
\scriptsize
\setlength{\tabcolsep}{2.5pt}
\begin{tabular}{lrrrrrr}
\toprule
Condition & Reps. & Signals (MCSE) & RMSE (MCSE) & Rank (MCSE) & Coverage (MCSE) & Width (MCSE) \\
\midrule
Baseline & 100 & 67.1 (0.7) & 1.141 (0.021) & 0.619 (0.017) & 0.945 (0.005) & 4.331 (0.014) \tabularnewline
Severe Sparsity & 100 & 20.4 (0.4) & 1.354 (0.024) & 0.402 (0.021) & 0.944 (0.006) & 5.314 (0.014) \tabularnewline
Dense & 100 & 135.2 (0.9) & 0.928 (0.016) & 0.755 (0.009) & 0.950 (0.005) & 3.634 (0.010) \tabularnewline
Extremity-Dependent Visibility & 100 & 119.8 (1.2) & 0.918 (0.016) & 0.763 (0.011) & 0.951 (0.005) & 3.790 (0.012) \tabularnewline
Weak Discrimination & 100 & 67.2 (0.7) & 1.308 (0.020) & 0.431 (0.020) & 0.891 (0.007) & 4.307 (0.014) \tabularnewline
Turnover & 100 & 56.7 (0.7) & 1.174 (0.022) & 0.618 (0.014) & 0.945 (0.005) & 4.438 (0.014) \tabularnewline
Multidimensional & 100 & 68.3 (0.7) & 1.312 (0.022) & 0.523 (0.017) & 0.893 (0.007) & 4.352 (0.014) \tabularnewline
\bottomrule
\end{tabular}
\begin{minipage}{0.97\textwidth}
\footnotesize
\textit{Note:} Each condition contains 100 independently generated data sets with 21 actors and 16 periods. Except for the replication count, entries are means with Monte Carlo standard errors in parentheses. RMSE and interval width are in latent-state units; rank is Spearman correlation at period 16; coverage is the fraction of 21 true decision-time states inside equal-tailed 95\% intervals.
\end{minipage}
\end{table}

All 700 optimization problems converged. Rank recovery improves with information density. Coverage remains close to nominal in the baseline, sparse, dense, extremity-dependent-visibility, and turnover designs. Weak discrimination and omitted dimensionality lower coverage to 0.891 (MCSE 0.007) and 0.893 (MCSE 0.007). These failures are substantively central: intervals calibrated under the assumed ordinal measurement process do not automatically protect against mismeasured discrimination or omitted content dimensions.

\subsection{Static and signal-based benchmarks}

\begin{table}[!htbp]
\centering
\caption{Dynamic Estimator versus Static and Signal-Based Alternatives}
\label{tabS:benchmarks}
\scriptsize
\begin{tabular}{llrrr}
\toprule
Condition & Method & Rank (MCSE) & RMSE (MCSE) & Coverage (MCSE) \\
\midrule
Baseline & Dynamic ordinal & 0.619 (0.017) & 1.141 (0.021) & 0.945 (0.005) \tabularnewline
Baseline & Static ordinal & 0.632 (0.015) & 1.212 (0.023) & 0.767 (0.011) \tabularnewline
Baseline & Mean signal & 0.623 (0.015) & 1.210 (0.024) & -- \tabularnewline
Baseline & Latest signal & 0.562 (0.015) & 1.426 (0.024) & -- \tabularnewline
Baseline & Recency weighted & 0.604 (0.014) & 1.289 (0.024) & -- \tabularnewline
Severe Sparsity & Dynamic ordinal & 0.402 (0.021) & 1.354 (0.024) & 0.944 (0.006) \tabularnewline
Severe Sparsity & Static ordinal & 0.422 (0.017) & 1.379 (0.020) & 0.787 (0.009) \tabularnewline
Severe Sparsity & Mean signal & 0.421 (0.018) & 1.428 (0.021) & -- \tabularnewline
Severe Sparsity & Latest signal & 0.411 (0.017) & 1.473 (0.022) & -- \tabularnewline
Severe Sparsity & Recency weighted & 0.425 (0.017) & 1.450 (0.021) & -- \tabularnewline
Weak Discrimination & Dynamic ordinal & 0.431 (0.020) & 1.308 (0.020) & 0.891 (0.007) \tabularnewline
Weak Discrimination & Static ordinal & 0.456 (0.019) & 1.334 (0.023) & 0.711 (0.010) \tabularnewline
Weak Discrimination & Mean signal & 0.439 (0.020) & 1.373 (0.024) & -- \tabularnewline
Weak Discrimination & Latest signal & 0.350 (0.021) & 1.647 (0.030) & -- \tabularnewline
Weak Discrimination & Recency weighted & 0.384 (0.019) & 1.515 (0.027) & -- \tabularnewline
Multidimensional & Dynamic ordinal & 0.523 (0.017) & 1.312 (0.022) & 0.893 (0.007) \tabularnewline
Multidimensional & Static ordinal & 0.501 (0.018) & 1.283 (0.021) & 0.742 (0.010) \tabularnewline
Multidimensional & Mean signal & 0.489 (0.018) & 1.379 (0.023) & -- \tabularnewline
Multidimensional & Latest signal & 0.456 (0.018) & 1.575 (0.024) & -- \tabularnewline
Multidimensional & Recency weighted & 0.479 (0.018) & 1.465 (0.023) & -- \tabularnewline
\bottomrule
\end{tabular}
\begin{minipage}{0.94\linewidth}
\footnotesize
\textit{Note:} Each row averages 100 independently generated data sets; parentheses contain Monte Carlo standard errors. Coverage is defined only for the dynamic and static posterior models. The complete seven-condition comparison, including RMSE and coverage MCSEs, is \path{output/simulation_benchmark_summary.csv}.
\end{minipage}
\end{table}

In the baseline, the static model has slightly higher rank correlation, 0.632 (Monte Carlo SE 0.015) versus 0.619 (SE 0.017), but the dynamic model has lower RMSE and coverage 0.945 rather than 0.767. With dense information, the dynamic model's correlation is 0.755 (SE 0.009), compared with 0.706 (SE 0.011) for the static model. Severe sparsity leaves point rankings similar across methods, while dynamic intervals retain near-nominal coverage. The model's comparative advantage is therefore uncertainty calibration and the use of temporally rich information, not universal point-ranking dominance.

\section{Empirical Public-Position Posterior Results}

\subsection{Complete decision-time table}

\begin{table}[!htbp]
\centering
\caption{Complete Decision-Time Corpus-Conditional Public-Position States}
\label{tabS:empirical}
\small
\begin{tabular}{lrrr}
\toprule
Minister & Signals & $E(\theta_i^\star)$ & 95\% interval \\
\midrule
Asta Vrečko & 4 & 2.15 & [0.19, 4.18] \tabularnewline
Tanja Fajon & 19 & 1.43 & [-0.26, 3.10] \tabularnewline
Robert Golob & 10 & 1.37 & [0.08, 2.65] \tabularnewline
Luka Mesec & 1 & 1.02 & [-1.62, 3.72] \tabularnewline
Simon Maljevac & 1 & 1.00 & [-1.82, 3.77] \tabularnewline
Matjaž Han & 1 & 0.36 & [-2.45, 3.05] \tabularnewline
Ksenija Klampfer & 0 & 0.05 & [-2.98, 3.09] \tabularnewline
Franc Props & 0 & 0.03 & [-2.83, 2.94] \tabularnewline
Jože Novak & 0 & 0.03 & [-2.86, 2.89] \tabularnewline
Branko Zlobko & 0 & 0.03 & [-3.05, 3.06] \tabularnewline
Mateja Čalušić & 0 & 0.02 & [-3.08, 3.01] \tabularnewline
Andreja Kokalj & 0 & 0.02 & [-2.93, 2.89] \tabularnewline
Bojan Kumer & 0 & 0.02 & [-2.95, 2.93] \tabularnewline
Klemen Boštjančič & 0 & 0.02 & [-2.93, 2.98] \tabularnewline
Igor Papič & 0 & 0.02 & [-3.07, 3.09] \tabularnewline
Matej Arčon & 0 & 0.01 & [-3.03, 3.05] \tabularnewline
Borut Sajovic & 0 & 0.00 & [-2.99, 2.92] \tabularnewline
Aleksander Jevšek & 0 & 0.00 & [-2.98, 2.95] \tabularnewline
Alenka Bratušek & 0 & -0.01 & [-2.87, 3.06] \tabularnewline
Vinko Logaj & 0 & -0.02 & [-3.08, 3.16] \tabularnewline
Valentina Prevolnik Rupel & 0 & -0.02 & [-3.11, 2.96] \tabularnewline
\bottomrule
\end{tabular}
\begin{minipage}{0.95\linewidth}
\footnotesize
\textit{Note:} Public-position summaries come from the preferred four-chain NUTS fit to the 36 direct and attributed records. Equal-tailed posterior intervals are conditional on the declared source universe, ordinal coding, cut-points, discrimination weights, and state-process calibration. A zero signal count means that the posterior at the decision date is the state-process prior.
\end{minipage}
\end{table}

The evidence is highly concentrated. Asta Vrečko has a posterior mean of 2.15 and a 95\% interval of $[0.19,4.18]$; Tanja Fajon and Robert Golob have means of 1.43 and 1.37. Simon Maljevac and Luka Mesec have positive means near one but intervals spanning both sides of zero because each contributes one record. The fifteen actors without individual records remain close to broad prior distributions. These are public-position measurements conditional on the fixed calibration, not estimates of private preferences or proposal-specific votes.

\subsection{Calibration envelopes}

For each $(a,k,\pi)$ in the continuous calibration class, the replication code computes an equal-tailed 95\% posterior interval for each component of the coherent probability vector. The lower and upper entries in \path{output/behavioral_calibration_envelopes.csv} are the extrema of those interval endpoints over the class. The calculation exploits monotonicity and is analytic at the class boundaries; it does not flatten posterior draws and calibration values into one artificial distribution.

The non-voting envelope is common across actors, $[0.10,0.50]$, because the framework does not estimate actor-specific participation. Support and opposition envelopes vary with the latent state but remain broad. Actor-level lower and upper endpoints are preserved in \path{output/behavioral_calibration_envelopes.csv}; no midpoint is reported because the calibration class has no probability measure. This prevents precision about vote direction from being generated by an arbitrary participation model.

\section{Rolling-Origin Signal Prediction}

The empirical predictive exercise uses the actor-event file rather than individual webpages. Each actor-event after an actor's first event is predicted using only earlier events. The dynamic model integrates the filtered state distribution forward to the target quarter. The static ordinal model retains a single actor state. Mean, latest-signal, and recency-weighted methods convert transparent code summaries through the same ordered-logistic cut-points. A Dirichlet-smoothed pooled base rate uses only events preceding the target date.

\begin{table}[!htbp]
\centering
\caption{Rolling Prediction of Future Actor-Event Signals}
\label{tabS:rolling}
\scriptsize
\begin{tabular}{lrrrrr}
\toprule
Method & Targets & Log score (SE) & Brier (SE) & RPS (SE) & Accuracy (SE) \\
\midrule
Dynamic ordinal & 12 & 1.290 (0.110) & 0.655 (0.059) & 0.135 (0.014) & 0.500 (0.151) \tabularnewline
Static ordinal & 12 & 1.348 (0.079) & 0.687 (0.040) & 0.151 (0.015) & 0.500 (0.151) \tabularnewline
Mean signal & 12 & 1.147 (0.112) & 0.599 (0.066) & 0.106 (0.014) & 0.500 (0.151) \tabularnewline
Latest signal & 12 & 1.164 (0.121) & 0.633 (0.082) & 0.106 (0.011) & 0.500 (0.151) \tabularnewline
Recency weighted & 12 & 1.141 (0.114) & 0.599 (0.068) & 0.104 (0.014) & 0.500 (0.151) \tabularnewline
Pooled base rate & 12 & 1.144 (0.110) & 0.624 (0.046) & 0.109 (0.018) & 0.417 (0.149) \tabularnewline
\bottomrule
\end{tabular}
\begin{minipage}{0.96\textwidth}
\footnotesize
\textit{Note:} Lower negative log score, Brier score, and ranked probability score (RPS) are better. The 12 targets comprise six events for Fajon, four for Golob, and two for Vrečko. Parentheses are conventional standard errors across targets. Because events are dependent within only three actors, these SEs describe dispersion and do not support conventional independent-sample tests.
\end{minipage}
\end{table}

The dynamic model has lower mean negative log score, Brier score, and RPS than the static ordinal model, but higher mean scores than the mean-signal and recency-weighted rules. With twelve dependent targets, these differences are descriptive, not evidence of dominance. The exercise materially qualifies the contribution: dynamic measurement organizes trajectories and uncertainty, but this small one-sided corpus does not establish superior short-horizon forecasting. Table~\ref{tabS:rolling} and the archived target-level output retain the complete numerical comparison.

\section{Sensitivity Analyses}

\subsection{Signal mode and state volatility}

The preferred column uses the archived four-chain NUTS fit. Three deterministic grid-filter sensitivities isolate declared calibration changes on the same frozen corpus: direct-only removes the four attributed records; tight-state sets $\tau=0.15$; and diffuse-state sets $\tau=0.50$. The fixed cut-points are not treated as estimable in this design. Varying $\tau$ assesses how much the decision-time scale depends on permitted temporal movement, while direct-only assesses the only lower-discrimination signal mode.

\begin{table}[!htbp]
\centering
\caption{Decision-Time Posterior Means under Alternative Specifications}
\label{tabS:sensitivity}
\begin{tabular}{lrrrr}
\toprule
Minister & Preferred & Direct only & $\tau=0.15$ & $\tau=0.50$ \\
\midrule
Asta Vrečko & 2.15 & 2.26 & 1.69 & 2.79 \tabularnewline
Tanja Fajon & 1.43 & 1.42 & 1.27 & 1.58 \tabularnewline
Robert Golob & 1.37 & 1.35 & 1.19 & 1.55 \tabularnewline
Luka Mesec & 1.02 & 1.02 & 0.77 & 1.43 \tabularnewline
Simon Maljevac & 1.00 & 1.02 & 0.77 & 1.43 \tabularnewline
Matjaž Han & 0.36 & -0.00 & 0.29 & 0.52 \tabularnewline
\bottomrule
\end{tabular}
\begin{minipage}{0.94\textwidth}
\footnotesize
\textit{Note:} The preferred column comes from the four-chain PyMC fit to 36 direct and attributed records. Alternative columns are deterministic actor-specific grid-filter sensitivities on 1,601 equally spaced points over $[-6,6]$. They are not sampler comparisons. All retain the fixed ordered-logistic cut-points; only source-mode inclusion or $\tau$ changes.
\end{minipage}
\end{table}

The principal ordering is stable, but the absolute magnitude is not. Removing attributed evidence has little effect on Vrečko, Fajon, Golob, Mesec, or Maljevac; Han returns to the prior mean because his only recovered signal is attributed. Raising $\tau$ from 0.15 to 0.50 moves Vrečko's filtered mean from 1.69 to 2.79, Fajon's from 1.27 to 1.58, Golob's from 1.19 to 1.55, and the two single-item estimates from 0.77 to 1.43. More volatile paths permit a recent positive communication to imply a more extreme decision-time state. The scale dependence reinforces the paper's calibration-conditional interpretation.

\subsection{Organizational evidence}

The preferred model excludes the five organizational rows. Treating them as individual signals would require an allocation rule assigning collective government or party positions to particular ministers. No such rule is innocuous. The data file therefore publishes these rows without inserting them into the individual likelihood. Comparative applications with multiple organizations and repeated decisions could extend the model with time-varying group states, but the present corpus cannot identify that structure.

\subsection{Coding robustness}

The outcome-blind rule-based recoding agrees exactly on 97.2\% of individual items and produces quadratic weighted $\kappa=0.947$. Because the recoder is algorithmic rather than human, these numbers are labeled agreement diagnostics rather than intercoder reliability.

The independent-human protocol is specified in Section~S3.6 and covers all 41 public-source records. Its packet, codebook, return template, attestation, integrity hash, validation script, and disagreement template are complete. The external coding itself remains pending, so the supplement reports no human agreement estimate and does not relabel the algorithmic diagnostic as intercoder reliability.

A separate calibrated perturbation exercise changes each individual ordinal code by one category with total probability 0.20 and refits the public-position model repeatedly. This does not estimate coder error; it asks whether the visible ordering survives a transparent local miscoding process. Table~\ref{tabS:coding_perturbation} reports the distribution of decision-time posterior means across the perturbations.

\begin{table}[!htbp]
\centering
\caption{Calibrated One-Category Coding-Perturbation Sensitivity}
\label{tabS:coding_perturbation}
\small
\begin{tabular}{lrrrr}
\toprule
Minister & Mean state & 5th--95th percentile & SD across perturbations & Mean interval width \\
\midrule
Asta Vrečko & 2.03 & [1.50, 2.53] & 0.31 & 4.00 \tabularnewline
Tanja Fajon & 1.43 & [1.18, 1.73] & 0.16 & 3.39 \tabularnewline
Robert Golob & 1.41 & [1.07, 1.79] & 0.22 & 2.66 \tabularnewline
Luka Mesec & 0.97 & [0.40, 1.02] & 0.17 & 5.41 \tabularnewline
Simon Maljevac & 0.95 & [0.40, 1.02] & 0.19 & 5.40 \tabularnewline
Matjaž Han & 0.39 & [0.00, 0.87] & 0.19 & 5.41 \tabularnewline
\bottomrule
\end{tabular}
\begin{minipage}{0.94\textwidth}
\footnotesize
\textit{Note:} Results use 500 independently seeded perturbations of the 36 direct or attributed records. Each original code is retained with probability 0.80 and shifted one category down or up with probability 0.10 each; outward shifts at $-2$ or $2$ are clipped to the boundary. The model is refitted by the declared grid filter. These are calibrated sensitivity results, not estimates of human coding error.
\end{minipage}
\end{table}

\subsection{Within-event dependence}

\begin{table}[!htbp]
\centering
\caption{Actor-Event Dependence Sensitivity from Direct NUTS Fits}
\label{tabS:event}
\small
\begin{tabular}{llrr}
\toprule
Minister & Specification & Posterior mean & 95\% interval \\
\midrule
Tanja Fajon & Item level & 1.41 & [-0.29, 3.10] \tabularnewline
Tanja Fajon & Event tempered & 1.30 & [-0.68, 3.22] \tabularnewline
Tanja Fajon & Event deduplicated & 1.44 & [-0.51, 3.34] \tabularnewline
Robert Golob & Item level & 1.36 & [0.04, 2.69] \tabularnewline
Robert Golob & Event tempered & 1.13 & [-0.46, 2.74] \tabularnewline
Robert Golob & Event deduplicated & 1.29 & [-0.36, 2.94] \tabularnewline
Asta Vrečko & Item level & 2.17 & [0.17, 4.21] \tabularnewline
Asta Vrečko & Event tempered & 2.03 & [-0.07, 4.30] \tabularnewline
Asta Vrečko & Event deduplicated & 2.22 & [0.02, 4.47] \tabularnewline
Luka Mesec & Item level & 0.97 & [-1.78, 3.72] \tabularnewline
Luka Mesec & Event tempered & 1.00 & [-1.64, 3.77] \tabularnewline
Luka Mesec & Event deduplicated & 1.07 & [-1.63, 3.75] \tabularnewline
Simon Maljevac & Item level & 0.99 & [-1.70, 3.67] \tabularnewline
Simon Maljevac & Event tempered & 0.98 & [-1.76, 3.68] \tabularnewline
Simon Maljevac & Event deduplicated & 1.01 & [-1.56, 3.58] \tabularnewline
Matjaž Han & Item level & 0.34 & [-2.41, 3.09] \tabularnewline
Matjaž Han & Event tempered & 0.38 & [-2.32, 3.10] \tabularnewline
Matjaž Han & Event deduplicated & 0.34 & [-2.40, 3.00] \tabularnewline
\bottomrule
\end{tabular}
\begin{minipage}{0.95\textwidth}
\footnotesize
\textit{Note:} The item-level fit uses 36 records, the event-tempered generalized-Bayes fit uses the same 36 records with weights summing to one within each of 18 actor-events, and the event-deduplicated fit uses 18 median-coded event records. Each fit uses four chains, 1,000 warm-up draws, and 1,000 retained draws per chain.
\end{minipage}
\end{table}

The event-tempered and event-deduplicated posteriors are fitted independently with the same four-chain direct NUTS protocol as the preferred item-level model. Changes in posterior means and interval widths therefore reflect the declared dependence treatment rather than post-processing of one common posterior.

\section{Disclosed Record as a Diagnostic Contrast}

\subsection{Official outcome}

The disclosed record lists five formal votes in favor and seven against. Nine eligible members appear in neither list and are therefore coded non-voting. The third category does not distinguish nonattendance from abstention. The Prime Minister is coded non-voting in the formal outcome while his verbal support at the start of the discussion is recorded separately.

\begin{table}[H]
\centering
\caption{Disclosed Formal Outcomes}
\label{tabS:outcomes}
\begin{tabular}{lll}
\toprule
Minister & Formal outcome & Note \\
\midrule
Tanja Fajon & For & Listed in favor \\
Alenka Bratušek & For & Listed in favor \\
Luka Mesec & For & Listed in favor \\
Matjaž Han & For & Listed in favor \\
Andreja Kokalj & For & Listed in favor \\
Klemen Boštjančič & Against & Listed against \\
Matej Arčon & Against & Listed against \\
Igor Papič & Against & Listed against \\
Vinko Logaj & Against & Listed against \\
Borut Sajovic & Against & Listed against \\
Ksenija Klampfer & Against & Listed against \\
Branko Zlobko & Against & Listed against \\
Robert Golob & Non-voting & Verbal support recorded separately \\
Asta Vrečko & Non-voting & Not listed in either formal category \\
Simon Maljevac & Non-voting & Not listed in either formal category \\
Aleksander Jevšek & Non-voting & Not listed in either formal category \\
Franc Props & Non-voting & Not listed in either formal category \\
Bojan Kumer & Non-voting & Not listed in either formal category \\
Jože Novak & Non-voting & Not listed in either formal category \\
Mateja Čalušić & Non-voting & Not listed in either formal category \\
Valentina Prevolnik Rupel & Non-voting & Not listed in either formal category \\
\bottomrule
\end{tabular}
\begin{minipage}{0.94\textwidth}
\footnotesize
\textit{Note:} The decision-day roster contains 21 eligible ministers: five are listed for, seven against, and nine in neither formal category. The source record does not distinguish abstention from nonattendance within the last category. These outcomes are excluded from every measurement fit.
\end{minipage}
\end{table}

The formal outcome is not merged into the latent estimation data. It appears only in \path{data/disclosed_outcomes.csv}, which every estimation script excludes. Because source discovery and coding were performed after disclosure, the comparison is a retrospective diagnostic contrast rather than a prospective validation test.

\subsection{Why the outcome does not identify the behavioral link}

The disclosed record contains one decision and 21 dependent institutional outcomes. A flexible participation or vote equation fitted to them could produce in-sample separation without transportable identification. Retrospective source reconstruction also prevents the record from serving as a clean holdout. The analysis therefore does not optimize the calibration class, select a measurement specification, tune a threshold, or report a fitted classification metric against the realized outcome.

The contrast is nevertheless informative descriptively. Several actors with no recoverable individual signal voted against, while several actors with supportive public records did not cast a recorded formal vote. This divergence illustrates---but, with one retrospective decision, cannot independently prove---that public issue positioning, formal participation, and proposal-specific vote direction are distinct objects.

The contrast is deliberately qualitative. All six actors with recovered individual signals cast a formal vote in favor or did not appear in either formal voting category, whereas all seven recorded opposing votes came from ministers for whom the restricted corpus recovered no attributable signal. That pattern is consistent with both substantive differences and severe observability selection. The single decision cannot distinguish those explanations, estimate an actor-specific participation process, or establish external predictive validity.

\section{Reproducibility Architecture}

The archive separates raw inputs, declared calibrations, samplers, generated outputs, and manuscript artifacts. The principal structure is:

\begin{verbatim}
data/
  actors.csv
  signals.csv
  actor_event_signals.csv
  disclosed_outcomes.csv
  human_coding/coding_packet_source_linked.csv
  human_coding/CODER_CODEBOOK.md
  human_coding/coder_attestation_template.csv
  CORPUS_README.md
stan/
  dynamic_ordinal_latent.stan
code/
  master.py
  build_exhibit_data.py
  standard_engine_common.py
  run_empirical_publication_nuts.py
  run_pymc_one_simulation.py
  run_pymc_empirical.py
  aggregate_pymc_verification.py
  test_posterior_target_equivalence.py
  run_cmdstan_simulations.py
  run_cmdstan_empirical.py
  fast_jax_nuts.py
  jax_nuts.py
  laplace_simulation.py
  run_strengthening_analyses.py
  run_coding_uncertainty_sensitivity.py
  merge_human_recoding.py
stata/
  00_master.do
  01_figures.do
  02_verify.do
output/
  standard_engine/pymc_simulation_replications.csv
  standard_engine/pymc_simulation_summary.csv
  standard_engine/pymc_empirical_diagnostics.csv
  standard_engine/pymc_empirical_positions.csv
  standard_engine/STANDARD_ENGINE_VERIFICATION.json
  standard_engine/posterior_target_equivalence.json
  empirical_nuts_ess_comparison.json
  empirical_calibration_sensitivity.csv
  behavioral_calibration_envelopes.csv
  simulation_benchmark_summary.csv
  rolling_signal_prediction_summary.csv
  coding_perturbation_sensitivity.csv
  stata/*.csv
tables/
main.tex
online_supplement.tex
references.bib
replicate.do
run_all.sh
\end{verbatim}

The blind-ID crosswalk used after an external coding return is author-held outside the public repository until the return is frozen and merged. It is never sent to the coder and is not part of the uploadable public archive. The public packet hash, codebook, templates, and validation logic are sufficient to verify what was issued; after completion, the public repository should contain the frozen unadjudicated reliability outputs and their hashes, not the private pre-merge key.

The maintained-engine verification is reproduced by
\begin{verbatim}
python code/test_posterior_target_equivalence.py
python code/run_pymc_empirical.py --chains 4 --warmup 1000 --draws 1000 \
  --specifications item_level event_tempered event_deduplicated

# Run each simulation in a fresh process; the supplied controller executes
# 20 replications for each focal condition.
bash code/run_pymc_standard_engine_verification.sh
python code/aggregate_pymc_verification.py
\end{verbatim}
These commands reproduce the 80 four-chain simulation fits, the preferred empirical fit, both actor-event sensitivity fits, numerical target-equivalence test, diagnostic summaries, cross-engine comparisons, and machine-readable completion manifest. The broader 100-replication benchmark, rolling-origin prediction, algorithmic consistency check, actor-event construction, and calibrated coding perturbation are reproduced by \path{code/run_strengthening_analyses.py} and \path{code/run_coding_uncertainty_sensitivity.py}. The archived numerical tables and tidy CSV outputs are sufficient to verify every quantitative claim in the manuscript and supplement; no claim depends on a separately rendered graph.

The Stan program remains an independently inspectable mathematical reference. Matching CmdStan drivers are supplied for optional reproduction in Stan's maintained engine, but the reported standard-engine evidence comes from the executed PyMC model. The manuscript therefore does not equate the existence of Stan code with execution by CmdStan, and the validation claim does not depend on an unavailable toolchain.

Raw analytical inputs are not modified by the frozen reproduction mode. Random seeds and replication-level diagnostics are retained. The disclosed outcomes are stored separately from the restricted official-source corpus and are excluded by every measurement script. Behavioral envelopes use one coherent calibration formula and are labeled as sensitivity results. A completed independent human packet can be merged only after its return and attestation pass the validation checks in \path{code/merge_human_recoding.py}.

\section{Verification Status and Human-Coding Gate}

Table~\ref{tabS:status} distinguishes completed numerical checks from pending or unavailable evidence. This prevents the existence of code or a prepared protocol from being mistaken for an executed validation.

\begin{table}[!htbp]
\centering
\caption{Verification and Evidence Status at Archive Freeze}
\label{tabS:status}
\small
\begin{tabularx}{\textwidth}{P{0.25\textwidth}P{0.14\textwidth}Y}
\toprule
Component & Status & Evidence and consequence \\
\midrule
PyMC maintained-engine simulations & Complete & 80 four-chain fits, replication-level diagnostics, software metadata, and completion manifest are archived. \\
Empirical and actor-event NUTS fits & Complete & Item-level, event-tempered, and event-deduplicated specifications were executed independently. \\
Posterior-target equivalence & Complete & 100 random states; maximum absolute log-density difference $5.7\times10^{-13}$ against tolerance $10^{-8}$. \\
Cross-algorithm comparison & Complete & Maintained PyMC, independent JAX NUTS, and elliptical-slice summaries agree closely for decision-time states. \\
Frozen exhibit data & Complete & Seven tidy CSV files preserve the quantities used in the tabular and textual exhibits; no claim depends on unavailable graphic output. \\
Independent human recoding & Pending & Packet and validation machinery are complete, but no external return exists; no human reliability claim is made. \\
Prospective behavioral validation & Unavailable & Corpus discovery occurred after outcome disclosure and only one decision is observed; the disclosed record is diagnostic only. \\
Structural communication-selection model & Not estimated & The restricted corpus cannot identify the universe or distribution of unobserved communications. \\
\bottomrule
\end{tabularx}
\end{table}

A genuine second human judgment cannot be supplied by the author or an automated rule. Completion requires a qualified non-author to return all 41 categories, confidence ratings, public-source evidence fields, source-access declarations, and a signed independence and blinding attestation. The script checks row identity, packet hash, completeness, allowed categories, evidence fields, and attestation before opening the author-held crosswalk. It then freezes unadjudicated statistics and disagreements; any adjudication remains a later, separately logged stage. Until that process is complete, the algorithmic agreement and perturbation analyses remain diagnostics only.

\section{Remaining Limitations}

The design has five irreducible limits. First, the official-source corpus is narrow and retrospectively reconstructed; it is not a prospective census of the public information environment. Second, the absence of private communications is structural rather than ordinary item missingness: their universe cannot be enumerated from these data. Third, the one-dimensional ordinal construct compresses potentially distinct legal, diplomatic, humanitarian, and sanctions positions, and the empirical corpus contains no negative individual category. Fourth, one disclosed decision cannot identify or externally validate a participation and vote mapping. Fifth, external human recoding remains incomplete, so only algorithmic consistency and calibrated perturbation are available.

The empirical posterior is also conditional on fixed cut-points, discrimination weights, a common 16-quarter state grid, and state volatility. The visible supportive ordering is stable across the reported source-mode, volatility, coding, and event-dependence analyses, but the absolute latent unit is calibrated rather than naturally observed. The behavioral envelopes are conditional on the declared set $\mathcal C$ and are unions of posterior intervals, not quantiles from a probability mixture over mappings. Readers can widen or shift $\mathcal C$ in the public code; no identified-set or sharpness claim is made.

These limits define the domain of the evidence. A broader prospectively frozen corpus, repeated institutions with observed choices, and completed independent coding could support stronger claims. Within the present design, adding new dated public rows requires no change to the statistical model, but it would define a new corpus version and must not be presented as the frozen version analyzed here.

\nocite{albert1993bayesian,martin2002dynamic,hoffman2014nuts,carpenter2017stan,brier1950verification,gneiting2007strictly}
\bibliographystyle{chicago}
\bibliography{references}